\newtheorem{theorem}{Theorem}[section]
\newtheorem{lemma}[theorem]{Lemma}
\newtheorem{proposition}[theorem]{Proposition}
\newtheorem{remark}[theorem]{Remark}
\numberwithin{equation}{section}
\author[B. Alvarez]{Benjamin Alvarez}
\address[B. Alvarez]{Institut Elie Cartan de Lorraine \\
Universit{\'e} de Lorraine, 
57045 Metz Cedex 1, France}
\email{benjamin.alvarez@univ-lorraine.fr}
\author[J. Faupin]{J{\'e}r{\'e}my Faupin}
\address[J. Faupin]{Institut Elie Cartan de Lorraine \\
Universit{\'e} de Lorraine, 
57045 Metz Cedex 1, France}
\email{jeremy.faupin@univ-lorraine.fr}
\author[J.-C. Guillot]{Jean-Claude Guillot}
\address[J.-C. Guillot]{CNRS-UMR 7641, Centre de Math\'ematiques Appliqu\'ees, \\
Ecole Polytechnique,
91128 Palaiseau Cedex, France}
\email{guillot@cmapx.polytechnique.fr}
\begin{document}
\bibliographystyle{abbrv} \title[Fermionic Hamiltonians]{Hamiltonian models of interacting fermion fields in Quantum Field Theory}

\begin{abstract}
We consider hamiltonian models representing an arbitrary number of spin $1/2$ fermion quantum fields interacting through arbitrary processes of creation or annihilation of particles. The fields may be massive or massless. The interaction form factors are supposed to satisfy some regularity conditions in both position and momentum space. Without any restriction on the strength of the interaction, we prove that the Hamiltonian identifies to a self-adjoint operator on a tensor product of anti-symmetric Fock spaces and we establish the existence of a ground state. Our results rely on new interpolated $N_\tau$ estimates. They apply to models arising from the Fermi theory of weak interactions, with ultraviolet and spatial cut-offs.
\end{abstract}

\maketitle

\section{Introduction and statement of the main results}

In this paper, we consider a finite number of interacting fermion fields of spin $1/2$. Each field is associated to a different species of particles. The free energy is 
\begin{align}
H_f = \sum_{i=1}^n \int \omega_i( k_i ) b^*_i( \xi_i ) b_i( \xi_i ) d \xi_i , \label{eq:defHf_intro}
\end{align}
where $\xi_i = ( k_i , \lambda_i ) \in \mathbb{R}^3 \times \{ -1/2 , 1/2 \}$, with $k_i$ the momentum variable and $\lambda_i$ the spin variable for the $i^{\mathrm{th}}$ particle. Moreover, $b^\sharp_i( \xi_i )$, where $b^\sharp_i$ stands for $b^*_i$ or $b_i$, are the usual fermionic creation and annihilation operators for the $i^{\mathrm{th}}$ particle. Precise definitions will be given in Section \ref{subsec:model}. The relativistic dispersion relations $\omega_i( k_i )$ are defined by 
\begin{equation*}
\omega_i( k_i ) = \sqrt{ k_i^2 + m_i^2 },
\end{equation*}
with $m_i \ge 0$ the mass of the $i^{\mathrm{th}}$ field. The interaction terms are of the form
\begin{align}
 \int G( \xi_1 , \dots , \xi_n ) b_1^*( \xi_1 ) \dots b_p^*( \xi_p ) b_{p+1}( \xi_{p+1} ) \dots b_n( \xi_n ) d \xi_1 \dots d \xi_n + \mathrm{h.c.}, \label{eq:HI(G)intro}
\end{align}
the first term representing a process where $p$ particles are created while $n-p$ particles are annihilated, with $0\le p \le n$. The second term, which is the hermitian conjugate of the first one, represents the inverse process. In the previous equation, the interaction form factor $G$ is supposed to be square integrable, which makes \eqref{eq:HI(G)intro} a well-defined quadratic form. The interaction Hamiltonian that we consider in this paper, $H_I$, is given by the sum over all possible processes of interaction terms of the form \eqref{eq:HI(G)intro}; see the next section. Formally, the total Hamiltonian is then defined by $H = H_f + H_I$.

Important physical examples of processes that can be described by such a model arise from the Fermi theory of weak interactions. For instance, the weak decay of a muon into an electron, a muon neutrino and an electron antineutrino, or the scattering of an electron with an electron neutrino are well-described at low energy by the Fermi theory (see e.g. \cite[Chapter 2]{GrMu00_01}). Considering the formal Hamiltonian associated to the Lagrangian of the theory, and introducing ultraviolet and spatial cut-offs, one obtains an expression of the form above. Note that, in such processes, four spin $1/2$ fermion fields interact. More details on the Fermi theory of weak interactions, as well as other examples, will be given in Section \ref{subsec:weak}.

Our first main concern will be to show that the formal Hamiltonian $H$ defines a self-adjoint operator on Fock space. An obvious obstruction is that the free energy is only quadratic in the creation and annihilation operators, while the interaction Hamiltonian is of order $n$. Nevertheless, using that all particles involved are fermions, an argument due to Glimm and Jaffe \cite{GlJa77_01} shows that, if all interaction form factors $G$ belong to the Schwartz space of rapidly decreasing functions, then $H_I$ is a bounded operator. In particular, $H$ identifies to a self-adjoint operator. In fact, inspecting the proof of \cite[Proposition 1.2.3]{GlJa77_01}, one can see that, to apply the argument, it is actually sufficient that all $G$ belong to the domain of some power of the harmonic oscillator. Still, this condition imposes a constraint on the infrared behavior of the form factors which is much too strong to cover physically realistic cases. To circumvent this problem, our strategy will consist in applying suitable interpolation arguments in order to obtain refined $N_\tau$ estimates. The latter can then be applied to the abstract model studied here, with only \emph{mild} regularity assumptions on the infrared behavior of the form factors. Note that, as may be expected, the regularity assumptions that we will have to impose will be slightly stronger in the case of massless fields than in the massive case. The refined $N_\tau$ estimates, combined with a perturbation argument, will allow us to obtain the self-adjointness of $H$.

Once the self-adjointness of $H$ is established, our next concern will be to prove the existence of a ground state. In the case where all fields are massive, we will adapt an argument of \cite{DeGe99_01}, while in the more difficult case where some fields are supposed to be massless, we will employ an induction argument and follow the approach of \cite{GrLiLo01_01}. In both cases, the proof will have to rely on the refined $N_\tau$ estimates established previously, instead of the classical ones.

The remainder of this section is organized as follows. In Section \ref{subsec:model}, we define precisely the model considered in this paper, next we state our main results in Section \ref{subsec:results}. Section \ref{subsec:weak} is concerned with examples arising from the Fermi theory of weak interactions.

\subsection{The model}\label{subsec:model}

As mentioned above, we consider in this paper an abstract class of models representing a finite number, $n$, of interacting fermion fields. The total Hilbert space is the tensor product of $n$ antisymmetric Fock spaces,
\begin{align}
\mathcal{H} := \bigotimes_{i=1}^n \mathcal{F} , \qquad \mathcal{F} := \mathbb{C} \oplus \bigoplus_{l=1}^\infty \otimes_a^l \Big ( L^2( \mathbb{R}^3 \times \{-\frac12,\frac12\} ) \Big ). \label{eq:def_Fock}
\end{align}
Here $\otimes_a^l$ stands for the anti-symmetric tensor product. Throughout the paper, we use the notation $\xi = ( k , \lambda ) \in \mathbb{R}^3 \times \{ - 1/2 , 1/2 \}$, i.e. $k$ stands for the momentum variable and $\lambda$ the spin variable. Moreover, to distinguish between the $n$ different species of particles, the variable corresponding to the $i^{\mathrm{th}}$ Fock space will be denoted by $\xi_i = ( k_i , \lambda_i )$.

The free Hamiltonian, acting on $\mathcal{H}$, is the sum of the second quantizations of the free relativistic energy of $n$ particles of masses $m_i \ge 0$, 
\begin{align*}
H_f := \sum_{i=1}^n H_{f,i} ,
\end{align*}
where $H_{f,i}$ acts on the $i^{\mathrm{th}}$ Fock space and is given by
\begin{align*}
H_{f,i} := \mathrm{d} \Gamma ( \omega_i( k_i ) ) , \quad \omega_i( k_i ) := \sqrt{ k_i^2 + m_i^2 } , \quad m_i \ge 0 .
\end{align*}
Let $\mathcal{F}_{\mathrm{fin}}( \mathcal{S} )$ be the subset of $\mathcal{F}$ consisting of vectors $( \varphi_0 , \varphi_1 , \dots ) \in \mathcal{F}$ such that, for all $l \in \mathbb{N}^*$, $\varphi_l \in \otimes_a^l  L^2( \mathbb{R}^3 \times \{-1/2,1/2\})$ identifies to a function in the Schwartz space $\mathcal{S}( \mathbb{R}^{3l} ; \mathbb{C}^{2^l})$ and $\varphi_l=0$ for all but finitely many $l$'s. We recall that, in the sense of quadratic forms on $\mathcal{F}_{\mathrm{fin}}( \mathcal{S} ) \times \mathcal{F}_{\mathrm{fin}}( \mathcal{S} )$, $H_{f,i}$ is given by Equation \eqref{eq:defHf_intro}, %the following identity holds
%\begin{equation*}
%H_{f,i} = \int \omega_i( k_i ) b_i^*( \xi_i ) b_i( \xi_i ) d \xi_i.
%\end{equation*}
where $b^*_{i}(\xi_{i})$, respectively $b_{i}(\xi_{i})$, stands for the fermionic creation operator, respectively annihilation operator, acting on the $i^{\text{th}}$ Fock space.  The following anti-commutation relations are supposed to hold:
\begin{align*}
& \{ b_i( \xi_i ) , b^*_i( \xi'_i ) \} = \delta ( \xi_i - \xi'_i ) , \\
& \{ b_i( \xi_i ) , b_i( \xi'_i ) \} = \{ b_i^*( \xi_i ) , b^*_i( \xi'_i ) \} = 0 , \\
& \{ b_i^\sharp( \xi_i ) , b_j^\sharp( \xi_j ) \} = 0 , \quad i < j ,
\end{align*}
for all $i,j \in \{ 1  , \dots , n \}$. In what follows, we use the notation
\begin{align*}
\int f( \xi_i ) d \xi_i = \sum_{ \lambda_i \in \{ - \frac12 , \frac12 \} } \int_{ \mathbb{R}^3 } f( k_i , \lambda_i ) dk_i.
\end{align*}

The interaction Hamiltonian is given by
\begin{align}
H_I := \sum_{p=0}^n \sum_{\{ i_1,\dots,i_n \} \in \mathfrak{I}_p} \big ( H^{(p)}_{I,i_1,\dots,i_n}( G^{(p)}_{i_1,\dots,i_n} ) + \mathrm{h.c.} \big ), \label{eq:defHI}
\end{align}
where we have set
\begin{equation}
\mathfrak{I}_p := \big \{ \{i_1,\dots,i_n\}=\{1,\dots,n\}, \, i_1<\cdots<i_p, \, i_{p+1}<\cdots<i_{n}, \, i_1<i_{p+1} \big \}, \label{eq:def_Ip}
\end{equation} 
 and
\begin{align*}
H_{I,i_1,\dots,i_n}^{(p)} (G^{(p)}_{i_1,\dots,i_n}) := \int G^{(p)}_{i_1,\dots,i_n}( \xi_1 , \dots , \xi_n ) b^*_{i_1}( \xi_{i_1} ) \dots b^*_{i_p}( \xi_{i_p} ) b_{i_{p+1}}( \xi_{i_{p+1}} ) \dots b_{i_n}( \xi_{i_n} ) d \xi_1 \dots d \xi_n .  
\end{align*}
If $p=0$ in \eqref{eq:def_Ip}, it should be understood that the conditions $i_1 < \dots < i_p$ and $i_1 < i_{p+1}$ are empty, and likewise if $p=1$, $p=n-1$ or $p=n$. As mentioned above, physically, the expression of $H_{I,i_1,\dots,i_n}^{(p)} (G^{(p)}_{i_1,\dots,i_n})$ represents an interaction process where a particle of each species labeled by $i_1, \dots , i_p$ is created, while a particle of each species labeled by $i_{p+1}, \dots , i_n$ is annihilated. Summing over $\mathfrak{I}_p$ insures that all possible creation or annihilation processes are considered.  Assuming that $G$ is square integrable, it is not difficult to verify (see \cite[Theorem X.44]{RS}) that $H_{I,i_1,\dots,i_n}^{(p)} (G^{(p)}_{i_1,\dots,i_n})$ defines a quadratic form on
\begin{align*}
\Big ( \hat{\otimes}_{i=1}^n \mathcal{F}_{\mathrm{fin}}( \mathcal{S} ) \Big ) \times \Big ( \hat{\otimes}_{i=1}^n \mathcal{F}_{\mathrm{fin}}( \mathcal{S} ) \Big ),
\end{align*}
where $\hat{\otimes}$ stands for the algebraic tensor product. 

Formally, the total Hamiltonian that we will study is given by
\begin{equation}
H = H_f + H_I. \label{eq:defH_0}
\end{equation}
It is a well-defined quadratic form on $\big ( \hat{\otimes}_{i=1}^n \mathcal{F}_{\mathrm{fin}}( \mathcal{S} ) \big ) \times \big ( \hat{\otimes}_{i=1}^n \mathcal{F}_{\mathrm{fin}}( \mathcal{S} ) \big )$.

\subsection{Results}\label{subsec:results}
Before stating our main results, we introduce some notations. Let
$$
\ldbrack n_1 , n_2 \rdbrack := \{ n_1 , n_1+1 , \dots , n_2 \} , \qquad \ldbrack n_1, n_2 \rdbrack_{ i_0 } := \{ n_1 , n_1+1 , \dots , n_2 \} \setminus \{ i_0 \} ,
$$
for any integers $n_1 < n_2$ and $i_0$. We distinguish massive and massless particles,
\begin{align*}
\ldbrack 1 , n \rdbrack = \ldbrack 1 , n \rdbrack^> \cup \ldbrack 1 , n \rdbrack^0 ,
\end{align*}
with
\begin{align*}
\ldbrack 1 , n \rdbrack^> := \{ i \in \ldbrack 1 , n \rdbrack , m_i > 0 \}, \quad \ldbrack 1 , n \rdbrack^0 := \{ i \in \ldbrack 1 , n \rdbrack , m_i = 0 \}.
\end{align*}
We set in addition 
\begin{align*}
\ldbrack 1 , n \rdbrack^>_{i_0} := \ldbrack 1 , n \rdbrack^> \setminus \{ i_0 \} \quad \text{if } i_0 \in \ldbrack 1 , n \rdbrack^> , \qquad \ldbrack 1 , n \rdbrack^>_{i_0} := \ldbrack 1 , n \rdbrack^> \quad \text{if } i_0 \notin \ldbrack 1 , n \rdbrack^> ,
\end{align*}
and likewise for $\ldbrack 1 , n \rdbrack^0_{i_0}$.

Given operators $A_1, \dots , A_n$, we adopt the convention that
\begin{align*}
\prod_{ i \in \ldbrack 1 , n \rdbrack } A_i u := A_1 \cdots A_n u ,
\end{align*}
for any $u \in \mathcal{D}( A_1 \cdots A_n ) = \{ v \in \mathcal{D}( A_n ) , A_n v \in \mathcal{D} ( A_1 \cdots A_{n-1} ) \}$, where $\mathcal{D}( A )$ stands for the domain of an operator $A$.

Recalling the notation $\xi = ( k , \lambda ) \in \mathbb{R}^3 \times \{ - 1/2 , 1/2 \}$, we let
\begin{align}
\mathrm{h}_i^{j} := - \frac{d^2}{d k_{i,j} ^2} + k_{i,j}^2 \label{eq:def_harm_osc}
\end{align}
be the harmonic oscillator in one-dimension corresponding to the variable $k_{i,j}$, with 
\begin{equation*}
k_i = ( k_{i,1} , k_{i,2} , k_{i,3} ).
\end{equation*}
The corresponding operator acting on the variable $k_{i,j}$ in $\otimes_{i=1}^n L^2( \mathbb{R}^3 \times \{-1/2 , 1/2 \})$ is denoted by the same symbol. In other words, for $G \in \otimes_{i=1}^n L^2( \mathbb{R}^3 \times \{ -1/2 , 1/2 \})$, 
\begin{align*}
( \mathrm{h}_i^{j} G )( \xi_1 , \dots , \xi_n ) = - \frac{\partial^2 G}{\partial k_{i,j}^2} ( \xi_1 , \dots , \xi_n ) + k_{i,j}^2 G ( \xi_1 , \dots , \xi_n ) ,
\end{align*}
with $\xi_i = ( k_{i,1} , k_{i,2} , k_{i,3} , \lambda_i )$.

Our main results are summarized in the following two theorems. The first one shows that $H$ identifies to a self-adjoint operator on $\mathcal{H}$. The second one establishes the existence of a ground state for $H$, under stronger assumptions on the kernels $G^{(p)}_{i_1,\dots,i_n}$.
\begin{theorem}[Self-adjointness]\label{th:main}
Let $i_0 \in \ldbrack 1 , n \rdbrack$ and $\varepsilon > 0$. Suppose that, for all $p \in \ldbrack 0 , n \rdbrack$ and all set of integers $\{i_1,\dots,i_n\} \in \mathfrak{I}_p$,
\begin{align*}
G^{(p)}_{i_1,\dots,i_n} \in \mathcal{D} \Big ( \Big ( \prod_{ \small{ \substack{ i \in \ldbrack 1 , n \rdbrack^{>}_{i_0} \\ j \in \ldbrack 1 , 3 \rdbrack }Ê} } ( \mathrm{h}_i^{j} )^{ \frac12 - \frac{1}{n-1} + \varepsilon } \Big ) \Big ( \prod_{ \small{ \substack{ i \in \ldbrack 1 , n \rdbrack^{0}_{i_0} \\ j \in \ldbrack 1 , 3 \rdbrack }Ê} } ( \mathrm{h}_i^{j} )^{ \frac12 - \frac56 \frac{1}{n-1} +\varepsilon }   \Big ) \Big ).
\end{align*}
Then the quadratic form $H$ defined in \eqref{eq:defH_0} extends to a self-adjoint operator on $\mathcal{H}$ with domain
\begin{align*}
\mathcal{D}( H ) = \mathcal{D}( H_f ).
\end{align*}
Moreover, $H$ is semi-bounded from below and any core for $H_f$ is a core for $H$.
\end{theorem}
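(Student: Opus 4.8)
The plan is to prove that the interaction $H_I$ is $H_f$-bounded with relative bound $0$ and then to invoke the Kato--Rellich theorem. Since $H_f=\sum_i \mathrm{d}\Gamma(\omega_i)$ is the second quantization of a non-negative multiplication operator, it is self-adjoint and non-negative on its natural domain, and each summand $H^{(p)}_{I,i_1,\dots,i_n}(G^{(p)}_{i_1,\dots,i_n})+\mathrm{h.c.}$ in \eqref{eq:defHI} is of the form $A+A^*$, hence symmetric. Consequently, once I establish a bound
\[
\big\| H_I \psi \big\| \le C\,\|(H_f+1)^{1-\delta}\psi\|, \qquad \psi \in \mathcal{D}(H_f),
\]
for some $\delta\in(0,1)$ and some finite $C$ depending on the kernels only through the graph norms appearing in the statement, the theorem follows: by the spectral theorem, $\|(H_f+1)^{1-\delta}\psi\|\le \eta\|(H_f+1)\psi\|+C_\eta\|\psi\|$ for every $\eta>0$, so the relative bound is $0$ regardless of the size of $C$ — this is exactly what permits an arbitrary interaction strength. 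Kato--Rellich then gives self-adjointness on $\mathcal{D}(H)=\mathcal{D}(H_f)$, semi-boundedness from below, and the fact that any core for $H_f$ is a core for $H$.

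The crux is therefore the refined $N_\tau$ estimate displayed above. I would isolate a single interaction monomial $H^{(p)}_{I,i_1,\dots,i_n}(G)$ and single out the species $i_0$: for that one factor I use the elementary energy bound $\|b_{i_0}(h)\psi\|\le \|\omega_{i_0}^{-1/2}h\|_{L^2}\,\|H_f^{1/2}\psi\|$, which follows from Cauchy--Schwarz after inserting $\omega_{i_0}^{1/2}\omega_{i_0}^{-1/2}$ and recognizing $\int \omega_{i_0}(\xi)\|b_{i_0}(\xi)\psi\|^2\,d\xi=\langle\psi,H_f\psi\rangle$; this produces the factor $(H_f+1)^{1/2}$ and, crucially, needs no regularity of $G$ in the $i_0$ variable, which explains why $i_0$ is excluded from the products in the hypothesis. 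The remaining $n-1$ operators must be absorbed using only the boundedness of the fermionic fields together with the regularity of $G$. Because $G$ couples all variables the fields cannot be factored directly, so I would pass to the Glimm--Jaffe position-space representation of \cite{GlJa77_01}, where the fermionic fields satisfy the local bound $\int\|\psi^\sharp(x)\phi\|^2\,dx\le C\|\phi\|^2$ on each number sector, and where harmonic-oscillator regularity of $G$ — which simultaneously encodes decay and smoothness, hence position-space decay — makes the resulting $3(n-1)$-dimensional integration converge.

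To reach the sharp exponents I would not estimate the monomial crudely but set up an analytic family of operators interpolating between a pure boundedness estimate (valid for $G$ in the domain of a high power of the $\mathrm{h}_i^{j}$, essentially the Glimm--Jaffe bound with no energy) and the energy estimate above, and apply Stein complex interpolation via the three-lines theorem. Optimizing the interpolation parameter so that the power of $(H_f+1)$ stays strictly below $1$, and distributing the required regularity evenly over the $n-1$ non-distinguished species, yields precisely $\tfrac12-\tfrac{1}{n-1}+\varepsilon$ per variable; note this is consistent with the case $n=2$, where the exponent is non-positive and only $G\in L^2$ is needed. The main obstacle is the massless case: when $m_i=0$ the dispersion $\omega_i(k_i)=|k_i|$ degenerates at the origin, so the infrared contributions and the weight $\omega_i^{-1/2}$ are harder to control, and compensating this singularity inside the interpolation costs a little extra regularity, which is the origin of the strictly larger exponent $\tfrac12-\tfrac56\tfrac{1}{n-1}+\varepsilon$ for massless species. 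Managing this infrared trade-off while keeping the total energy power below $1$, uniformly over all $p$ and all index sets in $\mathfrak{I}_p$, is the delicate step; the symmetry of $H_I$, the Kato--Rellich application, and the core statement are then routine.
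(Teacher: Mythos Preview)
Your overall architecture---prove $\|H_I\psi\|\le C\|(H_f+1)^{1-\delta}\psi\|$ by complex interpolation between two endpoint estimates and then apply Kato--Rellich---is exactly the paper's route, and your handling of the final step (Young's inequality to reduce to relative bound $0$, then Kato--Rellich) is correct. But you have the role of the distinguished index $i_0$ backwards, and this is not cosmetic. In the paper the $i_0$ factor is handled by the \emph{fermionic boundedness} $\|b^\sharp_{i_0}(f)\|=\|f\|_2$, which costs neither energy nor regularity; it is the \emph{other} $n-1$ factors that are treated either by the energy estimate (giving the endpoint $(H_f+1)^{(n-1)/2}$ and requiring $\omega_i^{-1/2}G\in L^2$ for $i\neq i_0$) or by a harmonic-oscillator expansion (giving the bounded endpoint and requiring $(\mathrm{h}_i^j)^sG\in L^2$ for $i\neq i_0$, $s>1/2$). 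Your claim that the energy bound applied to $b_{i_0}$ ``needs no regularity of $G$ in the $i_0$ variable'' is false: it needs $\omega_{i_0}^{-1/2}G\in L^2$ in $\xi_{i_0}$, which is an infrared condition when $i_0$ is massless. The correct reason $i_0$ is absent from the products in the hypothesis is that fermionic boundedness for a single factor is free. Moreover, with your assignment the ``energy'' endpoint would already carry only the power $(H_f+1)^{1/2}$, so interpolation would be unnecessary---and the regularity you would then need on the remaining $n-1$ variables would be the full $s>1/2$, not the smaller exponents of the theorem.

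Two further points of comparison. First, the paper does not pass to a position-space Glimm--Jaffe representation for the bounded endpoint; instead it expands $G$ in the tensor basis of harmonic-oscillator eigenfunctions $(e_l)$ in each of the $3(n-1)$ non-$i_0$ momentum coordinates, uses $\|b^\sharp_{i,\lambda_i}(e_l)\|=1$, and buys summability with the weights $(2l+1)^{-s}$ coming from $(\mathrm{h}_i^j)^s$. Second, the specific mechanism behind the massless correction is Hardy's inequality in one dimension: for $m_i=0$ one writes
\[
\|\omega_i^{-1/2}u\|_2=\||k_i|^{-1/2}u\|_2\lesssim\big\|\,|k_{i,1}|^{-1/6}|k_{i,2}|^{-1/6}|k_{i,3}|^{-1/6}u\,\big\|_2\lesssim\Big\|\prod_{j}(\mathrm{h}_i^j)^{1/12}u\Big\|_2,
\]
which inserts a base exponent $1/12$ at the energy endpoint; interpolation then yields $\frac{1}{12}+\theta\big(s-\frac{1}{12}\big)$, and at $\theta=1-\frac{2(1-\varepsilon)}{n-1}$, $s=\frac12+\kappa$ this is $\frac12-\frac56\frac{1}{n-1}+\varepsilon$.
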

For any $\Lambda > 0$, we set $B_\Lambda := \{ \xi = ( k , \lambda ) \in \mathbb{R}^3 \times \{ -1/2 , 1/2 \} , |k| \le \Lambda \}$.
\begin{theorem}[Existence of a ground state]\label{th:GS}
Under the conditions of Theorem \ref{th:main}, suppose in addition that, for all $i' \in \ldbrack 1 , n \rdbrack^0$, $p$ and $i_1,\dots,i_n$ as in the statement of Theorem \ref{th:main}, $1\le r < 2$ and $\Lambda > 0$,
\begin{align}
& \int_{B_\Lambda}  \big| k_{i'} \big|^{-2r} \Big \| \big ( \mathrm{S} G^{(p)}_{i_1,i_2,\dots,i_n} \big )( \cdots , \xi_{i'} , \cdots )  \Big \|_2^r d \xi_{i'} < \infty , \label{eq:cond_GS_1}
\end{align}
and
\begin{align}
& \int_{B_\Lambda}  \big| k_{i'} \big|^{-r} \Big \| \big ( \mathrm{S} ( \nabla_{k_{i'}} G^{(p)}_{i_1,i_2,\dots,i_n} ) \big ) ( \cdots , \xi_{i'} , \cdots )  \Big \|_2^r d \xi_{i'} < \infty , \label{eq:cond_GS_2}
\end{align}
where $\mathrm{S}$ stands for the operator
\begin{align*}
\mathrm{S} := \Big ( \prod_{ \small{ \substack{ i \in \ldbrack 1 , n \rdbrack^{>}_{i_0} \\ j \in \ldbrack 1 , 3 \rdbrack }Ê} } ( \mathrm{h}_i^{j} )^{ \frac12 - \frac{1}{n-1} + \varepsilon } \Big ) \Big ( \prod_{ \small{ \substack{ i \in \ldbrack 1 , n \rdbrack^{0}_{i_0} \setminus \{ i' \} \\ j \in \ldbrack 1 , 3 \rdbrack }Ê} } ( \mathrm{h}_i^{j} )^{ \frac12 - \frac56 \frac{1}{n-1} +\varepsilon }   \Big ) ,
\end{align*}
on $L^2 ( ( \mathbb{R}^3 \times \{ -1/2 , 1/2 \} )^n )$. Then $H$ has a ground state, i.e., $E:=\inf \sigma( H )$ is an eigenvalue of $H$.
\end{theorem}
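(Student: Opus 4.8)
The plan is to realize the ground state as a weak limit of ground states of infrared-regularized Hamiltonians, handling first the case where all fields are massive and then bootstrapping to the general case with massless fields. Throughout, the self-adjointness and the domain identity $\mathcal{D}(H)=\mathcal{D}(H_f)$ of Theorem \ref{th:main}, together with the refined $N_\tau$ estimates, serve as the main tools, replacing the classical relative bounds. \emph{Massive case.} Suppose first $\ldbrack 1,n\rdbrack^0=\emptyset$, so that $m:=\min_i m_i>0$ and $H_f\ge mN$, where $N$ is the total number operator; hence $N$ is $H$-bounded and a normalized minimizing sequence for $E=\inf\sigma(H)$ has uniformly bounded energy and particle number. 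Adapting \cite{DeGe99_01}, I would extract a weakly convergent subsequence $\psi_\ell\rightharpoonup\psi$, deduce $\langle\psi,H\psi\rangle\le E\|\psi\|^2$ by weak lower semicontinuity, and show $\psi\neq 0$: the spectral gap $H_f\ge mN$ prevents the leakage of particles to arbitrarily low energy, so no mass escapes and $\psi/\|\psi\|$ is a ground state. Equivalently, this gap localizes the essential spectrum at $E+m>E$, placing $E$ in the discrete spectrum.

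\emph{Infrared regularization.} When $\ldbrack 1,n\rdbrack^0\neq\emptyset$, for $\sigma>0$ I would replace each massless dispersion $\omega_{i}(k_i)=|k_i|$, $i\in\ldbrack 1,n\rdbrack^0$, by a regularized $\omega_{i,\sigma}:=\max(|k_i|,\sigma)$ (or $\sqrt{k_i^2+\sigma^2}$), obtaining a Hamiltonian $H_\sigma$ satisfying $H_{f,\sigma}\ge c_\sigma N$ with $c_\sigma>0$. This $H_\sigma$ falls into the massive case and therefore has a normalized ground state $\psi_\sigma$, $H_\sigma\psi_\sigma=E_\sigma\psi_\sigma$. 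Since $\omega_{i,\sigma}\downarrow\omega_i$ as $\sigma\downarrow 0$, the refined $N_\tau$ estimates make the construction uniform and give $E_\sigma\to E$.

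\emph{Uniform soft-particle bounds and infrared limit.} The crux is to extract a nonzero weak limit of $(\psi_\sigma)$, following \cite{GrLiLo01_01}. For $i'\in\ldbrack 1,n\rdbrack^0$ I would use the pull-through formula: on the ground state,
\begin{equation*}
\big(H_\sigma-E_\sigma+\omega_{i',\sigma}(k_{i'})\big)\,b_{i'}(\xi_{i'})\psi_\sigma = -\,[\,b_{i'}(\xi_{i'}),H_I\,]\,\psi_\sigma ,
\end{equation*}
so that, using $H_\sigma-E_\sigma\ge 0$ and $\omega_{i',\sigma}(k_{i'})\ge |k_{i'}|$,
\begin{equation*}
\| b_{i'}(\xi_{i'})\psi_\sigma \| \le |k_{i'}|^{-1}\,\| [\,b_{i'}(\xi_{i'}),H_I\,]\psi_\sigma \| .
\end{equation*}
The commutator removes one annihilation operator and leaves the form factors $G^{(p)}_{i_1,\dots,i_n}(\cdots,\xi_{i'},\cdots)$; applying the refined $N_\tau$ estimates, with the operator $\mathrm{S}$ absorbing the remaining regularizing factors, bounds the right-hand side by $|k_{i'}|^{-1}\|(\mathrm{S}G)(\cdots,\xi_{i'},\cdots)\|_2$ times a power of $(H_\sigma+1)$ controlled uniformly in $\sigma$. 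Condition \eqref{eq:cond_GS_1} is precisely what makes the resulting expected soft-particle number $\int_{|k_{i'}|<\epsilon}\|b_{i'}(\xi_{i'})\psi_\sigma\|^2\,d\xi_{i'}$ vanish uniformly in $\sigma$ as $\epsilon\downarrow 0$, while differentiating the pull-through identity in $k_{i'}$ and invoking condition \eqref{eq:cond_GS_2} controls $\nabla_{k_{i'}}b_{i'}(\xi_{i'})\psi_\sigma$, yielding the regularity needed for compactness of the one-particle density matrices. These two bounds give tightness in the soft modes, so no mass escapes to $k=0$, a weak limit $\psi_\sigma\rightharpoonup\psi$ has $\|\psi\|=1$, and $E_\sigma\to E$ with lower semicontinuity forces $H\psi=E\psi$. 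Several massless fields are handled by the induction alluded to in the introduction, removing their cutoffs successively.

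The decisive difficulty is the massless case: proving that the weak limit is nonzero, i.e. ruling out the infrared escape of soft particles. This is where the precise form of the hypotheses is essential; the appearance of $L^r$ norms with $1\le r<2$ rather than $L^2$ is dictated by the interpolated $N_\tau$ estimates, and the technical heart of the argument is verifying that conditions \eqref{eq:cond_GS_1}--\eqref{eq:cond_GS_2} feed exactly into those estimates to produce a uniform, $\epsilon$-vanishing soft-particle bound.
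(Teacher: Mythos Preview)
Your outline matches the paper's strategy: the massive case is handled by an essential-spectrum localization in the spirit of \cite{DeGe99_01}, and the massless case by the \cite{GrLiLo01_01} scheme of giving small positive masses, applying the pull-through formula to the approximate ground states, and passing to the limit (the paper does this by induction, removing one mass at a time, as you also indicate at the end). Two points in your sketch need sharpening. First, the passage from a weak limit $\psi_\sigma\rightharpoonup\psi$ to $\|\psi\|=1$ is not automatic and is precisely the crux; the paper secures it by proving \emph{strong} convergence of the approximate ground states, sector by sector in particle number, via the Rellich--Kondrachov theorem in $W^{1,r}(B_\Lambda^{l_1+\cdots+l_n})$, not via compactness of one-particle density matrices. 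Second, the appearance of $1\le r<2$ is tied to this Sobolev compactness step rather than to the interpolated $N_\tau$ estimates: conditions \eqref{eq:cond_GS_1}--\eqref{eq:cond_GS_2} are exactly what put each fixed-particle-number component of $\Phi_{m_1}$ into $W^{1,r}$ uniformly in $m_1$, and Rellich--Kondrachov then upgrades weak to strong $L^2$-convergence on compacts. Finally, be aware that when $n$ is odd the fermionic anticommutation produces $[b_1(\xi_1),H_I]=-2H_I\,b_1(\xi_1)+H_I'(\xi_1)$; the paper absorbs the extra term via the unitary equivalence $(-1)^N H_{m_1}(-1)^N=H_{m_1}-2H_I$, so that the relevant resolvent bound by $|k_1|^{-1}$ still holds.
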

We emphasize that our results hold without any restriction on the strength of the interaction. This means that, if one introduces a coupling constant $g$ into the model and study the Hamiltonian $H_g = H_f + g H_I$, then $H_g$ is self-adjoint and has a ground state \emph{for all} values of $g \in \mathbb{R}$. In the next section, we give concrete examples insuring that the conditions on the kernels $G^{(p)}_{i_1,i_2,\dots,i_n}$ stated in the previous theorems are satisfied.

\subsection{Applications to mathematical models of the weak interactions}\label{subsec:weak}
As mentioned above, the main examples we have in mind of physical models of the form studied in this paper come from the Fermi theory of weak interactions.

Fermions involved in the Lagrangian of the Standard Model are quarks and leptons. Each fermion is a Dirac particle with spin $1/2$ and is distinct from its antiparticle. There are six quarks, $\mathrm{up} (u)$, $\mathrm{down} (d)$, $\mathrm{strange}(s)$, $\mathrm{charm}(c)$, $\mathrm{bottom}(b)$, $\mathrm{top}(t)$ and six leptons $e_-$, $\nu_e$, $\mu_-$, $\nu_\mu$, $\tau_-$, $\nu_\tau$ where $\nu_e$ (respectively $\nu_\mu$, $\nu_\tau$) is the electron neutrino (respectively the muon neutrino, the tau neutrino).

We are mainly interested in the quark and lepton Lagrangian, which arises by taking the normal Dirac kinetic energy and replacing the ordinary derivative by the covariant one. See \cite{We05_01,We05_02}, \cite{GuKi16_01} and \cite[Section 6.2, (6.11)]{Ka17_01}. This Lagrangian is used to calculate quark and lepton interactions. The full Lagrangian of the fermions is the sum of the lepton electroweak Lagrangian and the quark QCD Lagrangian. It is a finite sum of terms involving two fermions with spin $1/2$ together with gauge bosons. See \cite{Ka17_01}. Hamiltonian models of the weak interaction involving two fermion fields and one boson field have been studied in \cite{AlFapreprint,AsBaFaGu11_01,BaFaGu16_02,BaFaGuappear,BaGu09_01}.

Well known physical examples of interacting fermion fields with spin $1/2$ are given by the Fermi Theory of weak interactions with $V-A$ (Vector-Axial vector) coupling. In these examples, the Fermi Theory is associated to an \emph{effective} low-energy electroweak Lagrangian obtained by the contraction of the propagators of the gauge bosons $W^{\pm}$ and $Z$ in low energy electroweak processes. See e.g. \cite[Chapter 5]{GuKi16_01}. For instance, the weak decay of a muon into an electron, a muon neutrino and an electron antineutrino,
\begin{equation*}
\mu^{-} \to e^{-} + \nu_\mu + \bar{\nu}_e,
\end{equation*}
the scattering of an electron with an electron neutrino,
\begin{equation*}
e^{-} + \nu_e \to \nu_e + e^{-},
\end{equation*}
or the muon production in the scattering of an electron with a muon neutrino,
\begin{equation*}
e^{-} + \nu_\mu \to \mu^{-} + \nu_e,
\end{equation*}
are well-described at low energy by the Fermi Theory of weak interactions. See \cite[Chapter 2]{GrMu00_01}. 

Another fundamental example is the $\beta$ decay, i.e., the weak decay of the neutron. In the Fermi model, the $\beta$ decay is the weak decay of a neutron into a proton, an electron and an electron antineutrino,
\begin{equation*}
n \to p + e^{-} + \bar{\nu}_e.
\end{equation*}
Note that protons and neutrons are baryons, i.e., composite particles. Baryons are composed of three quarks: a neutron is composed of an up quark (u) and two down quarks (d), a proton is composed of two up quarks (u) and one down quark (d). Protons and neutrons may be approximately regarded as bound states of three quarks. In the quark model, the $\beta$ decay is the weak decay of a down quark (d) into a up quark (u), an electron and an electron antineutrino,
\begin{equation*}
d \to u + e^{-} + \bar{\nu}_e.
\end{equation*}
In this model also, four fermions of spin $1/2$ interact.

Recently, the decays of mesons B into mesons D have also been experimentally studied. Mesons are bosons composed of a quark and an antiquark. A meson B is composed of a bottom antiquark $(\overline{b})$ and a quark. A meson D is composed of a charm quark (c) and an antiquark. The following decays have been observed (see \cite{C,CB,DO}):
\begin{equation}\label{1.1}
    B^{-} \rightarrow  D^{0}+ l^{-} + \bar{\nu}_{l} , \qquad    \bar{B}^{0} \rightarrow  D^{+} + l^{-}+  \bar{\nu}_{l},
\end{equation}
and
\begin{equation}\label{1.2}
   B^{+} \rightarrow  \bar{D}^{0} + l^{+}+ \nu_{l}.
\end{equation}
Here $B^{+}= (u\bar{b})$, $B^{-}= (\bar{u}b)$, $ \bar{B}^{0}= (\bar{d}b)$, $D^{0}= (c\bar{u})$, $D^{+}= (c\bar{d})$, $\bar{D}^{0}= (u\bar{c})$, $l^{-}$ (respectively $l^{+}$) is a lepton of negative electric charge (respectively positive electric charge) and $\nu_{l}$, $\bar{\nu}_{l}$ are lepton neutrino and antineutrino. The decays \eqref{1.1} correspond to the transition of a quark b into a quark c,
\begin{equation}\label{1.3}
  b \rightarrow c + l^{-} + \bar{\nu}_{l}.
\end{equation}
The decay \eqref{1.2} is the transition of an antiquark $\bar{b}$ to an antiquark $\bar{c}$,
\begin{equation}\label{1.4}
  \bar{b}\rightarrow \bar{c} + l^{+}+ \nu_{l}.
\end{equation}
The decay \eqref{1.4} is the charge conjugation of the decay \eqref{1.3}. Both involve four fermions of spin $1/2$.

For an example of computation about the six-fermions process $b \rightarrow dq\overline{q}l^{+}l^{-}$, with $q \in (u,d,s)$, see e.g. \cite{HQ}. More generally, all physical processes we have in mind involve an even number of fermions. Nevertheless, for the sake of mathematical generality, we will consider in this paper an arbitrary number $n$ of fermions, with $n$ either even or odd. Some modifications of the proof in the odd case will be required.

In all the previous examples involving four Dirac particles, the formal Hamiltonian obtained from the corresponding Lagrangian is of the form \eqref{eq:defH_0}, with one or two massless fields (assuming that neutrinos are treated as massless, in accordance with the classical form of the Standard Model). After introduction of a (smooth) high-energy cut-off of parameter $\Lambda$ and a spatial cut-off, the kernels obtained from physics can be supposed to be of the form (see \cite{AlFapreprint,BaFaGu16_02,BaFaGuappear})
\begin{align*}
G( \xi_1 , \dots , \xi_4 ) = f_1( k_1 ) \cdots f_4( k_4 ) \delta_{\mathrm{reg}}( k_1 , \dots , k_4 ) ,
\end{align*}
where we disregarded the dependence on the spin variables for simplicity, $\delta_{\mathrm{reg}}( k_1 , \dots , k_4 )$ is a regularization of the Dirac distributions appearing due to momentum conservation, and the $f_i$ are $\mathrm{C}^\infty$ in $\mathbb{R}^3 \setminus \{ 0 \}$, satisfying the estimates
\begin{align}
| \partial^\alpha_{k_{i,j}} f_i( k_i ) | \le C_\alpha |k_i|^{\nu_i-\alpha} \mathds{1}_{ \cdot \le \Lambda }( |k_i| ) , \quad i = 1, \dots , 4 , \quad j = 1 , \dots , 3 , \quad \alpha \in \mathbb{N}. \label{eq:cond_fi} 
\end{align}
Physically, we have that $\nu_i = 0$ for any $i$.

For $n=4$, Theorem \ref{th:main} shows that if all interaction form factors $G$ satisfy
\begin{align}
G \in \mathcal{D} \Big ( \Big ( \prod_{ \small{ \substack{ i \in \ldbrack 1 , n \rdbrack^{>}_{i_0} \\ j \in \ldbrack 1 , 3 \rdbrack }Ê} } ( \mathrm{h}_i^{j} )^{ \frac16 + \varepsilon } \Big ) \Big ( \prod_{ \small{ \substack{ i \in \ldbrack 1 , n \rdbrack^0_{i_0} \\ j \in \ldbrack 1 , 3 \rdbrack }Ê} } ( \mathrm{h}_i^{j} )^{ \frac29 +\varepsilon }   \Big ) \Big ) , \label{eq:cond_n=4}
\end{align}
for some $\varepsilon > 0$, then $H$ is self-adjoint. 
Due to the presence of smooth ultraviolet cut-offs, and assuming that $\delta_{\mathrm{reg}}( k_1 , \dots , k_n )$ is also smooth, the condition \eqref{eq:cond_n=4} is implied by the requirement that, for massive particles, each function $f_i$ belongs to the Sobolev space $H^{1+\varepsilon}( \mathbb{R}^3 )$ for some $\varepsilon>0$ (except possibly the one labelled by $i_0$ that must only belong to $L^2( \mathbb{R}^3 )$), while for massless particles, each function $f_i$ belongs to $H^{4/3+\varepsilon}( \mathbb{R}^3 )$ (again, except possibly the one labelled by $i_0$). Hence we need $\nu_i > -1/2$ in \eqref{eq:cond_fi} for massive particles (i.e. $i \in \ldbrack 1 , n \rdbrack^>_{i_0}$), and $\nu_i > - 1/6$ for massless particles (i.e. $i \in \ldbrack 1 , n \rdbrack^0_{i_0}$), and $\nu_{i_0} > - 3/2$. In particular, the physical case $\nu_i=0$ is covered by our assumptions.

As for the conditions \eqref{eq:cond_GS_1} and \eqref{eq:cond_GS_2} in Theorem \ref{th:GS}, they concern only the massless fields. One can check that they are satisfied provided that $\nu_i \ge 1/2$ in \eqref{eq:cond_fi}, for any $i \in \ldbrack 1 , n \rdbrack^0$. Hence, to prove the existence of a ground state, if massless fields are involved, we need to impose an infrared regularization compared to the physical case. This is due to the method employed \cite{GrLiLo01_01} whose advantage is to allow us to establish the existence of a ground state without any restriction on the strength of the interaction. If one introduces a coupling parameter $g$ into the model and use perturbative methods \cite{BaFrSi98_01,Pi03_01}, it is likely that one can rely on our refined $N_\tau$ estimates to prove the existence of a ground state for $H=H_f + gH_I$ for small enough values of $g$, without imposing any infrared regularization.

%In the applications we have in mind, $n=4$ and one or two massless particles (neutrinos) are involved.  %If one massless particle is involved, $i_0$ will therefore be chosen to correspond to that massless particle, and we see that the previous theorem covers the physical case. In the case where two massless particles are involved, we will still be able to cover the physical case ($\nu_i=0$) using yet another interpolation argument.

\section{Self-adjointness}

In this section we prove that the total Hamiltonian $H$ identifies to a self-adjoint operator, i.e., we prove Theorem \ref{th:main}. The strategy consists in establishing relative bounds of $H_I$ with respect to $H_f$. We begin with relative bounds in the sense of forms, next we turn to operator bounds.

In the following two subsections, we concentrate on a particular term of the interaction Hamiltonian $H_I$ (see \eqref{eq:defHI}) that, for simplicity, we write as
\begin{align}
H_I (G) = \int G( \xi_1 , \dots , \xi_n ) b_1^*( \xi_1 ) \dots b_p^*( \xi_p ) b_{p+1}( \xi_{p+1} ) \dots b_n( \xi_n ) d \xi_1 \dots d \xi_n ,  \label{eq:0}
\end{align}
for some $0 \le p \le n$.

\subsection{Form bounds}\label{subsec:form}

As will be recalled more precisely in the proof of the next lemma, the usual  $N_\tau$ estimates of Glimm and Jaffe \cite{GlJa77_01} show that $H_I (G)$ is relatively form-bounded with respect to. $H_f^{n/2}$. Our first aim is to find suitable conditions on the kernel $G$ such that $H_I$ is relatively form-bounded with respect to lower powers of $H_f$.

We recall that, for any function $f \in L^2( \mathbb{R}^3 \times \{ - 1/2 , 1/2 \} )$,
\begin{equation}
\| b^\sharp( f ) \| = \| f \|_2, \label{eq:bound_f}
\end{equation}
with the usual notations
\begin{align*}
& b^*( f ) = \int f( \xi ) b^*( \xi ) d \xi = \sum_{ \lambda \in \{ -\frac12 , \frac12 \} } \int_{ \mathbb{R}^3 } f( k , \lambda ) b^*_\lambda( k) dk , \\
&  b( f ) = \int \bar{f}( \xi ) b( \xi ) d \xi = \sum_{ \lambda \in \{ -\frac12 , \frac12 \} } \int_{ \mathbb{R}^3 } \bar{f}( k , \lambda ) b_\lambda( k) dk .
\end{align*}
Here $b^\sharp( f )$, $b_\lambda^{\sharp}( f )$ denote the usual fermionic creation and annihilation operators in $\mathcal{F}$. For $g \in L^2( \mathbb{R}^3 )$ and $\lambda \in \{ - 1/2 , 1/2 \}$, the notation $b^\sharp_\lambda( g )$ stands for
\begin{align*}
b^*_\lambda( g ) = \int_{ \mathbb{R}^3 } g( k ) b^*_\lambda( k) dk, \qquad b_\lambda( g ) \int_{ \mathbb{R}^3 } \bar{g}( k ) b_\lambda( k) dk ,
\end{align*}
and hence $\| b^\sharp_\lambda( g ) \| = \| g \|_2$.

We begin with a lemma which is close to Proposition 1.2.3 (b) in \cite{GlJa77_01}. We give a short proof for the convenience of the reader.
\begin{lemma}\label{lm:first}
For all $i_0 \in \ldbrack 1 , n \rdbrack$, $G \in \mathcal{D} ( \prod_{i\in\ldbrack 1 , n \rdbrack_{ i_0 } } \omega_i(k_i)^{-\frac12} )$ and $\varphi \in \mathcal{D}Ê((  \sum_{ i \in \ldbrack 1, n \rdbrack_{i_0} }H_{f,i} )^{ (n-1)/2 } )$, we have that
\begin{align}
\big | \langle \varphi ,ÊH_I(G) \varphi \rangle \big | \le \Big \|Ê \prod_{i\in\ldbrack 1 , n \rdbrack_{ i_0 } } \omega_i(k_i)^{-\frac12} G \Big \|_2 \Big \| \Big ( \sum_{ i \in \ldbrack 1, n \rdbrack_{i_0} }H_{f,i} + 1 \Big )^{ \frac{n-1}{2} } \varphi \Big \|^2. \label{eq:-1}
\end{align}
\end{lemma}
\begin{proof}
Assume that $i_0 \in \ldbrack 1, p \rdbrack$. We write
\begin{align*}
& \big | \langle \varphi ,ÊH_I(G) \varphi \rangle \big | \\
& = \Big | \int \Big \langle \prod_{ i \in \ldbrack 1 , p \rdbrack_{i_0} } b_i( \xi_i ) \varphi , \Big ( \int G( \xi_1 , \dots \xi_n ) b^*_{i_0} ( \xi_{i_0} ) d\xi_{i_0} \Big ) \prod_{ i \in \ldbrack p+1,n \rdbrack } b_i ( \xi_i ) \varphi \Big \rangle \prod_{ i \in \ldbrack 1 , n \rdbrack_{i_0} } d \xi_i \Big |.
\end{align*}
Using the Cauchy-Schwarz inequality and the fact that, for a.e. $\xi_i$, $i \in \ldbrack 1 , n \rdbrack_{i_0}$,
\begin{align*}
\Big \|Ê\int G( \xi_1 , \dots \xi_n ) b^*_{i_0} ( \xi_{i_0} ) d\xi_{i_0} \Big \| = \| G( \xi_1 , \dots , \xi_{i_0-1} , \cdot , \xi_{i_0+1} , \dots , \xi_n ) \|_2,
\end{align*}
(see \eqref{eq:bound_f}), we obtain that
\begin{align}
& \big | \langle \varphi ,ÊH_I(G) \varphi \rangle \big | \notag \\
& = \int \Big \| \prod_{ i \in \ldbrack 1 , p \rdbrack_{i_0} } b_i( \xi_i ) \varphi \Big \|  \| G( \xi_1 , \dots , \xi_{i_0-1} , \cdot , \xi_{i_0+1} , \dots , \xi_n ) \|_2 \Big \| \prod_{ i \in \ldbrack p+1,n \rdbrack } b_i ( \xi_i ) \varphi \Big \| \prod_{ i \in \ldbrack 1 , n \rdbrack_{i_0} } d \xi_i . \label{eq:8}
\end{align}
Now, we observe that
\begin{align*}
\int \prod_{ i \in \ldbrack 1 , p \rdbrack_{i_0} } \omega_i( k_i ) \Big \| \prod_{ i \in \ldbrack 1 , p \rdbrack_{i_0} } b_i( \xi_i ) \varphi \Big \|^2 \prod_{ i \in \ldbrack 1 , p \rdbrack_{i_0} } d \xi_i & = \Big \langle \varphi , \prod_{ i \in \ldbrack 1 , p \rdbrack_{i_0} } H_{f,i} \varphi \Big \rangle \\
& \le \Big \langle \varphi , \Big ( \sum_{ i \in \ldbrack 1 , p \rdbrack_{i_0} } H_{f,i} \Big )^{p-1} \varphi \Big \rangle ,
\end{align*}
and likewise, 
\begin{align*}
\int \prod_{ i \in \ldbrack p+1 , n \rdbrack } \omega_i( k_i ) \Big \| \prod_{ i \in \ldbrack p+1 , n \rdbrack } b_i( \xi_i ) \varphi \Big \|^2 \prod_{ i \in \ldbrack p+1 , n \rdbrack } d \xi_i  & \le \Big \langle \varphi , \Big ( \sum_{ i \in \ldbrack p+1 , n \rdbrack } H_{f,i} \Big )^{n-p} \varphi \Big \rangle .
\end{align*}
Combining this with \eqref{eq:8} and the Cauchy-Schwarz inequality, we deduce that
\begin{align}
& \big | \langle \varphi ,ÊH_I(G) \varphi \rangle \big | \notag \\
& \le \Big \| \prod_{i\in\ldbrack 1 , n \rdbrack_{ i_0 } } \omega_i(k_i)^{-\frac12} G \Big \|_2 \Big \langle \varphi , \Big ( \sum_{ i \in \ldbrack 1 , p \rdbrack_{i_0} } H_{f,i} \Big )^{p-1} \varphi \Big \rangle^{\frac12} \Big \langle \varphi , \Big ( \sum_{ i \in \ldbrack p+1 , n \rdbrack } H_{f,i} \Big )^{n-p} \varphi \Big \rangle^{\frac12} . \label{eq:9}
\end{align}
The estimate \eqref{eq:-1} follows directly from \eqref{eq:9}. The argument is similar in the case where $i_0 \in \ldbrack p+1 , n \rdbrack$.
\end{proof}
\begin{remark}
\label{remark}
The previous proof shows that the following more precise estimate holds:
\begin{align*}
\big | \langle \varphi ,ÊH_I(G) \psi \rangle \big | \le \Big \|Ê \prod_{i\in\ldbrack 1 , n \rdbrack_{ i_0 } } \omega_i(k_i)^{-\frac12} G \Big \|_2 \Big \|  \prod_{ i \in \ldbrack 1 , p \rdbrack_{i_0} } H_{f,i}^{\frac12} \varphi \Big \| \Big \| \prod_{ i \in \ldbrack p+1, n \rdbrack_{i_0} }H_{f,i}^{\frac12}  \psi \Big \| .
\end{align*}
This refined estimate will be useful in the next section.
\end{remark}

Next, we prove another lemma which, in our setting, is a slight improvement of Proposition 1.2.3 (c) in \cite{GlJa77_01} (see also \cite{Am04_01}). The idea is that, if $G$ is sufficiently regular (i.e. belongs to a suitable Schwartz space), then $H_I( G )$ extends to a bounded quadratic form. Our improvement compared to  \cite{GlJa77_01} consists in showing that regularity in all variables \emph{but one} is sufficient to obtain boundedness of $H_I(G)$. This will be important in applications.

Recall that the one-dimensional harmonic oscillator $\mathrm{h}_i^{j}$ has been defined in \eqref{eq:def_harm_osc}. A basis of normalized eigenvectors of $\mathrm{h}_i^{j}$ is denoted by $(e_l)_{ l \in \mathbb{N} }$, so that $\mathrm{h}_i^j e_l = (2l+1) e_l$.
\begin{lemma}\label{lm:second}
For all $s > 1/2$, there exists $\mathrm{C}_{s} > 0$ such that, for all $i_0 \in \ldbrack 1 , n \rdbrack$, $G \in \mathcal{D} ( \prod_{ \small{ \substack{ i \in \ldbrack 1 , n \rdbrack_{i_0} \\ j \in \ldbrack 1 , 3 \rdbrack } } } ( \mathrm{h}_i^{j}  )^s )$ and $\varphi \in \mathcal{H}$,
\begin{align}
\big | \langle \varphi ,ÊH_I(G) \varphi \rangle \big | \le \mathrm{C}_{s} \Big \| \prod_{ \small{ \substack{ i \in \ldbrack 1 , n \rdbrack_{i_0} \\ j \in \ldbrack 1 , 3 \rdbrack } } } ( \mathrm{h}_i^{j}  )^s G \Big \|_2 \| \varphi \|^2 . \label{eq:-2}
\end{align}
\end{lemma}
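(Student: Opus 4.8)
The plan is to realize $H_I(G)$ as a norm-convergent series of products of creation and annihilation operators, obtained by expanding $G$ in an eigenbasis of the harmonic oscillators in every variable \emph{except} the $i_0$-th one. Let $(e_l)_{l\in\mathbb{N}}$ be the Hermite basis diagonalizing the one-dimensional oscillator, and for a multi-index $\mathbf{l}_i=(l_{i,1},l_{i,2},l_{i,3})\in\mathbb{N}^3$ and a spin value $\lambda_i$ let $e_{\mathbf{l}_i,\lambda_i}$ be the product $e_{l_{i,1}}\otimes e_{l_{i,2}}\otimes e_{l_{i,3}}$ of Hermite functions in the three momentum directions, times the indicator of the spin value $\lambda_i$. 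These form an orthonormal basis of $L^2(\mathbb{R}^3\times\{-1/2,1/2\})$, simultaneously diagonalizing all $\mathrm{h}_i^{j}$ with eigenvalues $2l_{i,j}+1$. Writing $\alpha=((\mathbf{l}_i,\lambda_i))_{i\in\ldbrack 1,n\rdbrack_{i_0}}$ for the collection of retained indices, I would expand $G$ in these variables only:
\[
G(\xi_1,\dots,\xi_n)=\sum_{\alpha} c_{\alpha}(\xi_{i_0})\prod_{i\in\ldbrack 1,n\rdbrack_{i_0}} e_{\mathbf{l}_i,\lambda_i}(\xi_i),
\]
where each coefficient $c_\alpha$ is a function of the single remaining variable $\xi_{i_0}$, lying in $L^2(\mathbb{R}^3\times\{-1/2,1/2\})$. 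The whole point of leaving $\xi_{i_0}$ untouched is that no regularity is assumed there.

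Inserting this into \eqref{eq:0}, each term becomes a product of operators $b_i^\sharp(e_{\mathbf{l}_i,\lambda_i})$ for $i\neq i_0$ together with a single $b_{i_0}^\sharp(c_{\alpha})$ (a creation or annihilation operator according to whether $i_0\le p$ or $i_0>p$), in the fixed order prescribed by \eqref{eq:0}. By the fermionic norm identity \eqref{eq:bound_f}, every factor with $i\neq i_0$ has operator norm $\|e_{\mathbf{l}_i,\lambda_i}\|_2=1$, while the $i_0$-factor has norm $\|c_{\alpha}\|_2$; hence each term is a bounded operator of norm at most $\|c_{\alpha}\|_2$, irrespective of ordering and anticommutation signs. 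The triangle inequality then gives
\[
\big|\langle\varphi,H_I(G)\varphi\rangle\big|\le\Big(\sum_{\alpha}\|c_{\alpha}\|_2\Big)\|\varphi\|^2,
\]
provided the coefficient sum converges; this convergence will simultaneously justify the norm convergence of the series and the boundedness of $H_I(G)$ as an operator, not merely as a form, and will let the bound pass from a dense set of $\varphi$ to all of $\mathcal{H}$.

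It remains to estimate $\sum_\alpha\|c_\alpha\|_2$ by $\mathrm{C}_s\big\|\prod_{i,j}(\mathrm{h}_i^{j})^sG\big\|_2$. Since the basis functions are eigenvectors, self-adjointness of $\prod_{i\in\ldbrack 1,n\rdbrack_{i_0},\,j}(\mathrm{h}_i^{j})^s$ shows that the coefficients $\tilde c_\alpha$ of $\prod_{i,j}(\mathrm{h}_i^{j})^sG$ satisfy $\tilde c_\alpha=\prod_{i\neq i_0,\,j}(2l_{i,j}+1)^s\,c_\alpha$, so that $c_\alpha=\prod_{i\neq i_0,\,j}(2l_{i,j}+1)^{-s}\,\tilde c_\alpha$. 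Applying Cauchy--Schwarz over the index set $\alpha$ separates a weight sum from a Parseval sum:
\[
\sum_{\alpha}\|c_\alpha\|_2\le\Big(\sum_{\alpha}\prod_{\substack{i\in\ldbrack 1,n\rdbrack_{i_0}\\ j\in\ldbrack 1,3\rdbrack}}(2l_{i,j}+1)^{-2s}\Big)^{\frac12}\Big(\sum_{\alpha}\|\tilde c_\alpha\|_2^2\Big)^{\frac12}.
\]
By Parseval, the second factor equals $\big\|\prod_{i,j}(\mathrm{h}_i^{j})^sG\big\|_2$. The first factor factorizes over the $3(n-1)$ retained momentum components and the $n-1$ retained spins, yielding $2^{n-1}\big(\sum_{l\ge0}(2l+1)^{-2s}\big)^{3(n-1)}$, which is finite exactly when $2s>1$, i.e. $s>1/2$. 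This produces the constant $\mathrm{C}_s$ and closes the argument.

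The only genuine difficulty is the asymmetric treatment of $i_0$: one must resist expanding that variable in the oscillator basis and instead carry it as an $L^2$-valued coefficient, leaning on \eqref{eq:bound_f} to bound the corresponding creation/annihilation operator by a plain $L^2$ norm. This is precisely what upgrades the Glimm--Jaffe estimate to require regularity in all variables but one. Everything else is bookkeeping, and the threshold $s>1/2$ arises solely from the convergence of $\sum_{l}(2l+1)^{-2s}$ in each of the $3(n-1)$ retained momentum directions.
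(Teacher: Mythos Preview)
Your proposal is correct and follows essentially the same approach as the paper: expand $G$ in the Hermite (oscillator) basis in every variable except $\xi_{i_0}$, bound each resulting product of fermionic operators using \eqref{eq:bound_f}, and close with Cauchy--Schwarz over the multi-index together with the convergence of $\sum_l(2l+1)^{-2s}$ for $s>1/2$. Your packaging of the $i_0$-variable as an $L^2$-valued coefficient $c_\alpha$ and use of $b_{i_0}^\sharp(c_\alpha)$ is a slightly cleaner rendering of exactly what the paper does when it keeps the $\xi_{i_0}$-integral explicit and invokes \eqref{eq:bound_f} on $\int\langle\cdots\rangle\,b^*(\xi_{i_0})\,d\xi_{i_0}$.
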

\begin{proof}
Assume that $i_0 \in \ldbrack 1, p \rdbrack$. We have that
\begin{align*}
& \big | \langle \varphi , ÊH_I(G) \varphi \rangle \big | \\
 &= \Big | \int \Big \langle \prod_{ i \in \ldbrack 1,p \rdbrack } b_i ( \xi_i ) \varphi , G( \xi_1 , \dots , \xi_n ) \prod_{ i \in \ldbrack p+1,n \rdbrack } b_i ( \xi_i ) \varphi \Big \rangle \prod_{ i \in \ldbrack 1 , n \rdbrack } d \xi_i \Big | \\
& = \Big | \sum_{ \small{ \substack { \lambda_i \in \{-\frac12,\frac12\} : \\ i \in \ldbrack 1 , n \rdbrack_{i_0} }  }} \int \Big \langle \varphi , \sum_{ \small{ \substack { l_i^j \in \mathbb{N}  : \\ i \in \ldbrack 1 , n \rdbrack_{i_0} , j \in \ldbrack 1 , 3 \rdbrack } }  } \big \langle e_{l_1^1}Ê\otimes \cdots \otimes e_{l_i^j} \otimes \cdots \otimes e_{l_n^3} , G\big ( (\cdot , \lambda_1) , \cdots , \xi_{i_0} , \cdots , (\cdot , \lambda_n) \big ) \big \rangle_2  \\
&\qquad \qquad \qquad \qquad \quad b^*( \xi_{i_0} ) \prod_{ i \in \ldbrack 1,p \rdbrack_{i_0} } b^*_{i,\lambda_i} ( e_{l_i^j} ) \prod_{ i \in \ldbrack p+1,n \rdbrack } b_{i,\lambda_i} ( e_{l_i^j} ) \varphi \Big \rangle d \xi_{i_0} \Big | ,
\end{align*}
where the subscript in the first sum above means that for each $i \in \ldbrack 1 , n \rdbrack_{i_0}$, we sum over $\lambda_i \in \{-1/2,1/2\}$, and likewise, in the second sum, for each couple $( i , j ) \in \ldbrack 1 , n \rdbrack_{i_0} \times \ldbrack 1 , 3 \rdbrack$, we sum over $l_i^j \in \mathbb{N}$. The scalar product $\langle \cdot , \cdot \rangle_2$ appearing in the right side of the last equality stands for the scalar product in $L^2( \mathbb{R}^{3(n-1)} )$.

Using that $\mathrm{h}_i^{j}$ is self-adjoint and that $\mathrm{h}_i^{j} e_{l_1^1}Ê\otimes \cdots \otimes e_{l_n^3} = ( 2 l_i^j + 1 ) e_{l_1^1}Ê\otimes \cdots \otimes e_{l_n^3}$ for all $i,j$, we obtain that
\begin{align*}
 \big | \langle \varphi ,ÊH_I(G) \varphi \rangle \big | 
& = \Big | \sum_{ \small{ \substack { \lambda_i \in \{-\frac12,\frac12\} : \\ i \in \ldbrack 1 , n \rdbrack_{i_0} }  }} \int \Big \langle \varphi , \sum_{ \small{ \substack { l_i^j \in \mathbb{N}  : \\ i \in \ldbrack 1 , n \rdbrack_{i_0} , j \in \ldbrack 1 , 3 \rdbrack } }  }\prod_{ \small{Ê\substack{ i \in \ldbrack 1 , n \rdbrack_{i_0} \\ j \in \ldbrack 1 , 3 \rdbrack } } } \frac{1}{ ( 2 l_i^j + 1 )^s } \big \langle e_{l_1^1}Ê\otimes \cdots \otimes e_{l_n^3} , \\
& \qquad \qquad \qquad  \prod_{ \small{ \substack{ i \in \ldbrack 1 , n \rdbrack_{i_0} \\ j \in \ldbrack 1 , 3 \rdbrack } } } ( \mathrm{h}_i^{j}  )^s G\big ( (\cdot , \lambda_1) , \cdots , \xi_{i_0} , \cdots , (\cdot , \lambda_n) \big ) \big \rangle_2  \\
& \qquad \qquad \qquad b^*( \xi_{i_0} )  \prod_{ i \in \ldbrack 1,p \rdbrack_{i_0} } b^*_{i,\lambda_i} ( e_{l_i^j} ) \prod_{ i \in \ldbrack p+1,n \rdbrack } b_{i,\lambda_i} ( e_{l_i^j} ) \varphi \Big \rangle d \xi_{i_0} \Big | \\
& \le \sum_{ \small{ \substack { \lambda_i \in \{-\frac12,\frac12\} : \\ i \in \ldbrack 1 , n \rdbrack_{i_0} }  }} \sum_{ \small{ \substack { l_i^j \in \mathbb{N}  : \\ i \in \ldbrack 1 , n \rdbrack_{i_0} , j \in \ldbrack 1 , 3 \rdbrack } }  } \prod_{ \small{Ê\substack{ i \in \ldbrack 1 , n \rdbrack_{i_0} \\ j \in \ldbrack 1 , 3 \rdbrack } } } \frac{1}{ ( 2 l_i^j + 1 )^s } \Big | \Big \langle \varphi , \Big ( \int \big \langle e_{l_1^1}Ê\otimes \cdots \otimes e_{l_n^3} , \\
& \qquad \qquad \qquad \prod_{ \small{ \substack{ i \in \ldbrack 1 , n \rdbrack_{i_0} \\ j \in \ldbrack 1 , 3 \rdbrack } } } ( \mathrm{h}_i^{j} )^s G\big ( (\cdot , \lambda_1) , \cdots , \xi_{i_0} , \cdots , (\cdot , \lambda_n) \big ) \big \rangle_2  b^*( \xi_{i_0} ) d \xi_{i_0} \Big ) \\
& \qquad \qquad \qquad   \prod_{ i \in \ldbrack 1,p \rdbrack_{i_0} } b^*_{i,\lambda_i} ( e_{l_i^j} ) \prod_{ i \in \ldbrack p+1,n \rdbrack } b_{i,\lambda_i} ( e_{l_i^j} ) \varphi \Big \rangle  \Big |.
\end{align*}
Next, by \eqref{eq:bound_f}, we see that the operator into parentheses in the last equation is bounded and satisfies
\begin{align*}
& \Big \|  \int \big \langle e_{l_1^1}Ê\otimes \cdots \otimes e_{l_n^3} , \prod_{ \small{ \substack{ i \in \ldbrack 1 , n \rdbrack_{i_0} \\  j \in \ldbrack 1 , 3 \rdbrack } } } ( \mathrm{h}_i^{j}  )^s G\big ( (\cdot , \lambda_1) , \cdots , \xi_{i_0} , \cdots , (\cdot , \lambda_n) \big ) \big \rangle_{L^2( \mathbb{R}^{3(n-1)} ) } b^*( \xi_{i_0} ) d \xi_{i_0}  \Big \| \\
& =  \Big ( \int  \big | \big \langle e_{l_1^1}Ê\otimes \cdots \otimes e_{l_n^3} , \prod_{ \small{ \substack{ i \in \ldbrack 1 , n \rdbrack_{i_0} \\ j \in \ldbrack 1 , 3 \rdbrack } } } ( \mathrm{h}_i^{j}  )^s G\big ( (\cdot , \lambda_1) , \cdots , \xi_{i_0} , \cdots , (\cdot , \lambda_n) \big ) \big \rangle_2 \big |^2 d \xi_{i_0} \Big )^{\frac12}.
\end{align*}
Combining this with the fact that $\| b^\sharp_{i,\lambda_i} ( e_{l_i^j} ) \| = \| e_{l_i^j} \|_2 = 1$, we deduce that
\begin{align*}
& \big | \langle \varphi ,ÊH_I(G) \varphi \rangle \big | \le \| \varphi \|^2 \sum_{ \small{ \substack { \lambda_i \in \{-\frac12,\frac12\} : \\ i \in \ldbrack 1 , n \rdbrack_{i_0} }  }} \sum_{ \small{ \substack { l_i^j \in \mathbb{N}  : \\ i \in \ldbrack 1 , n \rdbrack_{i_0} , j \in \ldbrack 1 , 3 \rdbrack } }  } \prod_{ \small{Ê\substack{ i \in \ldbrack 1 , n \rdbrack_{i_0} \\ j \in \ldbrack 1 , 3 \rdbrack } } } \frac{1}{ ( 2 l_i^j + 1 )^s } \\
&\qquad \Big ( \int  \big | \big \langle  e_{l_1^1}Ê\otimes \cdots \otimes e_{l_n^3} ,  \prod_{ \small{ \substack{ i \in \ldbrack 1 , n \rdbrack_{i_0} \\ j \in \ldbrack 1 , 3 \rdbrack } } } ( \mathrm{h}_i^{j} )^s G\big ( (\cdot , \lambda_1) , \cdots , \xi_{i_0} , \cdots , (\cdot , \lambda_n) \big ) \big \rangle_2 \big |^2 d \xi_{i_0} \Big )^{\frac12} .
\end{align*}
Applying again the Cauchy-Schwarz inequality, using that $\sum_{l_i^j \in \mathbb{N}} (2 l_i^j+1)^{-2s} < \infty$ since $s > 1/2$, and that the sum over the $\lambda_i$'s is finite, we obtain
\begin{align*}
 \big | \langle \varphi ,ÊH_I(G) \varphi \rangle \big | 
& \le \mathrm{C}_{s} \| \varphi \|^2 \Big ( \sum_{ \small{ \substack { \lambda_i \in \{-\frac12,\frac12\} : \\ i \in \ldbrack 1 , n \rdbrack_{i_0} }  }} \sum_{ \small{ \substack { l_i^j \in \mathbb{N}  : \\ i \in \ldbrack 1 , n \rdbrack_{i_0} , j \in \ldbrack 1 , 3 \rdbrack } }  } \int  \big | \big \langle  e_{l_1^1}Ê\otimes \cdots \otimes e_{l_n^3} , \\
 & \qquad \qquad \qquad   \prod_{ \small{ \substack{ i \in \ldbrack 1 , n \rdbrack_{i_0} \\ j \in \ldbrack 1 , 3 \rdbrack } } } ( \mathrm{h}_i^{j} )^s G\big ( (\cdot , \lambda_1) , \cdots , \xi_{i_0} , \cdots , (\cdot , \lambda_n) \big ) \big \rangle_2 \big |^2 d \xi_{i_0}  \Big )^{\frac12} \\
& = \mathrm{C}_{s} \| \varphi \|^2 \Big \| \prod_{ \small{ \substack{ i \in \ldbrack 1 , n \rdbrack_{i_0} \\ j \in \ldbrack 1 , 3 \rdbrack } } } ( \mathrm{h}_i^{j} )^s G \Big \|_2 ,
\end{align*}
for some positive constant $\mathrm{C}_{s}$. This concludes the proof.
\end{proof}

Now we interpolate the estimates given by Lemmas \ref{lm:first} and \ref{lm:second}. This gives the following proposition.
\begin{proposition}\label{lm:1}
Let $s > 1/2$ and $0 \le \theta \le 1$. There exists a positive constant $\mathrm{C}_{s,\theta}$ such that, for all $i_0 \in \ldbrack 1 , n \rdbrack$, 
\begin{equation*}
G \in \mathcal{D} \Big (  \prod_{ \small{ \substack{ i \in \ldbrack 1 , n \rdbrack^>_{i_0} \\ j \in \ldbrack 1 , 3 \rdbrack }Ê} } ( \mathrm{h}_i^{j} )^{ \theta s}  \prod_{ \small { \substack { i\in \ldbrack 1 , n \rdbrack^0_{i_0} \\ j \in \ldbrack 1 , 3 \rdbrack } } } ( \mathrm{h}_i^{j} )^{\frac{1}{12}+\theta(s-\frac{1}{12})}  \Big ),
\end{equation*}
and $\varphi \in \mathcal{D} ( ( \sum_{ i \in \ldbrack 1, n \rdbrack_{i_0} }H_{f,i} )^{\frac{n-1}{2}( 1 - \theta )} )$, we have that
\begin{align}
& \big |Ê\big \langle \varphi ,ÊH_I ( G ) \varphi \big \rangle \big | \notag \\
&\le \mathrm{C}_{s,\theta} \Big \| \Big ( \prod_{ \small{ \substack{ i \in \ldbrack 1 , n \rdbrack^{>}_{i_0} \\ j \in \ldbrack 1 , 3 \rdbrack }Ê} } ( \mathrm{h}_i^{j} )^{ \theta s} \Big ) \Big ( \prod_{ \small { \substack { i\in \ldbrack 1 , n \rdbrack^0_{i_0} \\ j \in \ldbrack 1 , 3 \rdbrack } } } ( \mathrm{h}_i^{j} )^{\frac{1}{12}+\theta(s-\frac{1}{12})} ) \Big ) G \Big \|_2 \Big \| \Big ( \sum_{ i \in \ldbrack 1, n \rdbrack_{i_0} }H_{f,i} + 1 \Big )^{\frac{n-1}{2}( 1 - \theta )} \varphi \Big \|^2 . \label{eq:estim_lm1}
\end{align}
\end{proposition}

\begin{proof}
To interpolate the estimates given by Lemmas \ref{lm:first} and \ref{lm:second}, we rewrite \eqref{eq:-1} in a weaker version that will be more convenient.
For massive particles, $i \in \ldbrack 1 , n \rdbrack^>$, we have that $\| \omega_i(k_i)^{-\frac12} u \|_2 \lesssim \| u \|_2$, while in the case of massless particles, $i \in \ldbrack 1 , n \rdbrack^0$, we write%use Hardy's inequality in $\mathbb{R}^3$, $\| |k_i|^{-s} u \|_2 \lesssim \| | \nabla_{k_i} |^s u \|_2$ for $0 \le s < 3/2$. This gives
\begin{align*}
\| \omega_i(k_i)^{-\frac12} u \|_2 &= \| ( k_{i,1}^2 + k_{i,2}^2 + k_{i,3}^2 )^{-\frac14} u \|_2 \\
&\lesssim \| |k_{i,1}|^{-\frac16} |k_{i,2}|^{-\frac16} |k_{i,3}|^{-\frac16} u \|_2 \\
&\lesssim \| ( | \partial_{k_{i,1}} |^{\frac16}Ê | \partial_{k_{i,2}} |^{\frac16}Ê | \partial_{k_{i,3}} |^{\frac16}Ê ) u \|_2 \\
&\lesssim \Big \| \prod_{j \in \ldbrack 1 , 3 \rdbrack } ( \mathrm{h}_i^{j} )^{\frac1{12}} u \Big \|_2,  \quad i \in \ldbrack 1 , n \rdbrack^0.
\end{align*}
In the first inequality, we used that $abc \lesssim a^3+b^3+c^3$ for any positive numbers $a$, $b$, $c$, and in the second inequality we used Hardy's inequality in $\mathbb{R}$.
%\begin{align*}
%\| \omega_i(k_i)^{-\frac12} u \|_2 \lesssim \| | \nabla_{k_i} |^{\frac12} u \|_2 &\lesssim \| ( | \partial_{k_i^{(1)}} |^{\frac12}Ê + | \partial_{k_i^{(2)}} |^{\frac12}Ê + | \partial_{k_i^{(3)}} |^{\frac12}Ê ) u \|_2 \\
%&\lesssim \Big \| \prod_{j \in \ldbrack 1 , 3 \rdbrack } ( \mathrm{h}_i^{j} )^{\frac14} u \big \|_2,  \quad i \in \ldbrack 1 , n \rdbrack^0.
%\end{align*}
Hence \eqref{eq:-1} implies that
\begin{align}
\big |Ê\big \langle \varphi , H_I(G) \varphi \big \rangle \big | \lesssim \Big \|Ê\prod_{ \small { \substack { i\in \ldbrack 1 , n \rdbrack^0_{i_0} \\ j \in \ldbrack 1 , 3 \rdbrack } } } ( \mathrm{h}_i^{j} )^{\frac{1}{12}} G \Big \|_2 \Big \| \Big ( \sum_{ i \in \ldbrack 1, n \rdbrack_{i_0} }H_{f,i} + 1 \Big )^{ \frac{n-1}{2} } \varphi \Big \|^2. \label{eq:3}
\end{align}
 
Now we proceed to interpolation. Let $\tilde G \in L^2$. For $\tilde \varphi  \in \mathcal{H}$, consider the map
\begin{align*}
f:z \mapsto \Big \langle \Big ( & \sum_{ i \in \ldbrack 1, n \rdbrack_{i_0} }H_{f,i} + 1 \Big )^{-\frac{n-1}{2}( 1 - z )} \tilde \varphi , \\
&H_I \Big ( \Big ( \prod_{ \small{ \substack{ i \in \ldbrack 1 , n \rdbrack^{>}_{i_0} \\ j \in \ldbrack 1 , 3 \rdbrack }Ê} } ( \mathrm{h}_i^{j} )^{-zs} \Big ) \Big ( \prod_{ \small { \substack { i\in \ldbrack 1 , n \rdbrack^0_{i_0} \\ j \in \ldbrack 1 , 3 \rdbrack } } } ( \mathrm{h}_i^{j} )^{-\frac{1}{12}-z(s-\frac{1}{12})} ) \Big ) \tilde G \Big ) \Big ( \sum_{ i \in \ldbrack 1, n \rdbrack_{i_0} }H_{f,i} + 1 \Big )^{-\frac{n-1}{2}( 1 - z )} \tilde \varphi \Big \rangle.
\end{align*}
Since the operators $\mathrm{h}_i^{j}$ and $\sum H_{f,i} + 1$ are positive and invertible, one verifies that $f$ is analytic in $\{ z \in \mathbb{C} , 0 < \mathrm{Re}(z) < 1 \}$, and bounded and continuous in $\{ z \in \mathbb{C} , 0 \le \mathrm{Re}(z) \le 1 \}$. Moreover, Equations \eqref{eq:-1} and \eqref{eq:-2} show that 
\begin{align*}
\sup_{ \mathrm{Re}(z) = 0 } | f( z ) | \lesssim \| \tilde G \|_2 \| \tilde \varphi \|^2 , \quad \sup_{ \mathrm{Re}(z) = 1 } | f( z ) | \lesssim \| \tilde G \|_2 \| \tilde \varphi \|^2.
\end{align*}
Applying Hadamard's three lines lemma, we deduce that
\begin{align*}
\sup_{ 0 \le \mathrm{Re}(z) \le 1 } | f( z ) | \lesssim  \| \tilde G \|_2 \| \tilde \varphi \|^2 .
\end{align*}
Taking $\mathrm{Im}(z) = 0$, we obtain that
\begin{align*}
\Big | \Big \langle \Big (  \sum_{ i \in \ldbrack 1, n \rdbrack_{i_0} }H_{f,i} + 1 \Big )^{-\frac{n-1}{2}( 1 - \theta )}  \tilde \varphi \, , \, & H_I \Big ( \Big ( \prod_{ \small{ \substack{ i \in \ldbrack 1 , n \rdbrack^{>}_{i_0} \\ j \in \ldbrack 1 , 3 \rdbrack }Ê} } ( \mathrm{h}_i^{j} )^{- \theta s} \Big )  \Big ( \prod_{ \small { \substack { i\in \ldbrack 1 , n \rdbrack^0_{i_0} \\ j \in \ldbrack 1 , 3 \rdbrack } } } ( \mathrm{h}_i^{j} )^{-\frac{1}{12} - \theta (s-\frac{1}{12})} ) \Big ) \tilde G \Big ) \\
& \Big ( \sum_{ i \in \ldbrack 1, n \rdbrack_{i_0} }H_{f,i} + 1 \Big )^{-\frac{n-1}{2}( 1 - \theta )} \tilde \varphi \Big \rangle \Big | \lesssim \| \tilde G \|_2 \| \tilde \varphi \|^2 ,
\end{align*}
for any $0 \le \theta \le 1$, $\tilde G \in L^2$ and $\tilde \varphi \in \mathcal{H}$. Applying this to 
\begin{align*}
\tilde G = \prod_{ \small{ \substack{ i \in \ldbrack 1 , n \rdbrack^{>}_{i_0} \\ j \in \ldbrack 1 , 3 \rdbrack }Ê} } ( \mathrm{h}_i^{j} )^{ \theta s}  \prod_{ \small { \substack { i\in \ldbrack 1 , n \rdbrack^0_{i_0} \\ j \in \ldbrack 1 , 3 \rdbrack } } } ( \mathrm{h}_i^{j} )^{\frac{1}{12}+\theta(s-\frac{1}{12})} G \qquad \text{and} \qquad \tilde \varphi = ( \sum_{ i \in \ldbrack 1, n \rdbrack_{i_0} }H_{f,i} + 1 )^{\frac{n-1}{2}( 1 - \theta )} \varphi,
\end{align*}
this implies the statement of the lemma.
\end{proof}

\subsection{Operator bounds}

% mainfile: Fermionic_Hamiltonians.tex
%
In this section we improve the results of Section \ref{subsec:form} by establishing relative bounds of $H_I(G)$ with respect to $H_f$. The first step is to prove the following lemma, using Remark \ref{remark}.

\begin{lemma}\label{lm:first_1}
For all $i_0 \in \ldbrack 1 , n \rdbrack$, $G \in \mathcal{D} ( \prod_{i\in\ldbrack 1 , n \rdbrack_{ i_0 } } \omega_i(k_i)^{-\frac12} )$ and $\varphi \in \mathcal{D}Ê((  \sum_{ i \in \ldbrack 1, n \rdbrack_{i_0} }H_{f,i} )^{ (n-1)/2 } )$, we have that
\begin{align}
\big \|ÊH_I(G) \varphi \big \| \le \Big \|Ê \prod_{i\in\ldbrack 1 , n \rdbrack_{ i_0 } } (1+\omega_i(k_i)^{-\frac12}) G \Big \|_2 \Big \| \Big ( \sum_{ i \in \ldbrack 1, n \rdbrack_{i_0} }H_{f,i} + 1 \Big )^{ \frac{n-1}{2} } \varphi \Big \|. \label{eq:-1_0}
\end{align}
\end{lemma}
\begin{proof}
Suppose that $i_0 \in \ldbrack 1, p \rdbrack$. It follows from Remark \ref{remark} that, for all $\tilde G \in \mathcal{D} ( \prod_{i\in\ldbrack 1 , n \rdbrack_{ i_0 } } \omega_i(k_i)^{-\frac12} )$,
\begin{equation}
\Big ( \prod_{ i \in \ldbrack 1 , p \rdbrack_{i_0} } H_{f,i}^{\frac12} \Big )^{-1} H_I( \tilde G ) \Big ( \prod_{ i \in \ldbrack p+1, n \rdbrack }H_{f,i}^{\frac12} \Big )^{-1} \le \Big \|Ê \prod_{i\in\ldbrack 1 , n \rdbrack_{ i_0 } } \omega_i(k_i)^{-\frac12} \tilde G \Big \|_2. \label{eq:csq_rk}
\end{equation}
Let $G \in \mathcal{D} ( \prod_{i\in\ldbrack 1 , n \rdbrack_{ i_0 } } \omega_i(k_i)^{-\frac12} )$. We claim that
\begin{equation}
H_I( G ) \Big ( \prod_{ i \in \ldbrack 1, n \rdbrack_{i_0} }(H_{f,i}+1)^{\frac12} \Big )^{-1} \in \mathcal{L}( \mathcal{H} ). \label{eq:claim_0}
\end{equation}

To prove \eqref{eq:claim_0}, we write
\begin{align}
& H_I( G ) \Big ( \prod_{ i \in \ldbrack 1, n \rdbrack_{i_0} }(H_{f,i}+1)^{\frac12} \Big )^{-1} \notag \\
& = \Big ( \prod_{ i \in \ldbrack 1 , p \rdbrack_{i_0} } H_{f,i}^{\frac12} \Big )^{-1} \Big ( \prod_{ i \in \ldbrack 1 , p \rdbrack_{i_0} } H_{f,i}^{\frac12} \Big ) H_I( G ) \Big ( \prod_{ i \in \ldbrack 1, n \rdbrack_{i_0} }(H_{f,i}+1)^{\frac12} \Big )^{-1} . \label{eq:ahe1}
\end{align}
Using the pull-through formula $f( H_{f,i} ) b^*(\xi_i) = b^*(\xi_i) f( H_{f,i} + \omega_i(k_i) )$, for any measurable function $f$, we obtain that
\begin{align}
\Big ( \prod_{ i \in \ldbrack 1 , p \rdbrack_{i_0} } H_{f,i}^{\frac12} \Big ) H_I( G ) = \int G( \xi_1 , \dots , \xi_n ) & b_1^*( \xi_1 ) \dots b_p^*( \xi_p ) b_{p+1}( \xi_{p+1} ) \dots b_n( \xi_n ) \notag \\
& \Big ( \prod_{ i \in \ldbrack 1 , p \rdbrack_{i_0} } (H_{f,i} + \omega_i(k_i) )^{\frac12} \Big ) d \xi_1 \dots d \xi_n . \label{eq:ahe2}
\end{align}
Moreover, we have that
\begin{equation}
\big \|Ê( \omega_i( k_i ) + 1 )^{-\frac12} (H_{f,i} + \omega_i(k_i) )^{\frac12} (H_{f,i}+1)^{-\frac12} \big \| \le 1. \label{eq:ahe3}
\end{equation}
Inserting \eqref{eq:ahe2} and \eqref{eq:ahe3} into \eqref{eq:ahe1}, it is not difficult to deduce that
\begin{align*}
& \Big \|ÊH_I( G ) \Big ( \prod_{ i \in \ldbrack 1, n \rdbrack_{i_0} }(H_{f,i}+1)^{\frac12} \Big )^{-1} \Big \| \\
& \le \Big \| \Big ( \prod_{ i \in \ldbrack 1 , p \rdbrack_{i_0} } H_{f,i}^{\frac12} \Big )^{-1} H_I \Big ( \prod_{ i \in \ldbrack 1, p \rdbrack_{i_0} } ( \omega_i( k_i ) + 1 )^{\frac12} |G| \Big ) \Big ( \prod_{ i \in \ldbrack p+1, n \rdbrack }(H_{f,i}+1)^{\frac12} \Big )^{-1} \Big \|.
\end{align*}

Applying \eqref{eq:csq_rk} to $\tilde G = \prod_{ i \in \ldbrack 1, p \rdbrack_{i_0} } ( \omega_i( k_i ) + 1 )^{\frac12} |G|$ proves \eqref{eq:claim_0} and establishes \eqref{eq:-1_0} in the case where $i_0 \in \ldbrack 1 , p \rdbrack$. The case where $i_0 \in \ldbrack p+1,n \rdbrack$ can be treated in the same way.
\end{proof}
Now, as in the previous section, we use an interpolation argument to obtain the following relative bound.
\begin{proposition}\label{prop:relative_bound}
Let $s > 1/2$ and $0 \le \theta \le 1$. There exists a positive constant $\mathrm{C}_{s,\theta}$ such that, for all $i_0 \in \ldbrack 1 , n \rdbrack$, 
\begin{equation*}
G \in \mathcal{D} \Big (  \prod_{ \small{ \substack{ i \in \ldbrack 1 , n \rdbrack^{>}_{i_0} \\ j \in \ldbrack 1 , 3 \rdbrack }Ê} } ( \mathrm{h}_i^{j} )^{ \theta s}  \prod_{ \small { \substack { i\in \ldbrack 1 , n \rdbrack^0_{i_0} \\ j \in \ldbrack 1 , 3 \rdbrack } } } ( \mathrm{h}_i^{j} )^{\frac{1}{12}+\theta(s-\frac{1}{12})} ) \Big ),
\end{equation*} 
and $\varphi \in \mathcal{D} ( ( \sum_{ i \in \ldbrack 1, n \rdbrack_{i_0} }H_{f,i} )^{\frac{n-1}{2}( 1 - \theta )} )$, we have that
\begin{align}
& \big \|ÊH_I ( G ) \varphi \big \| \notag \\
&\le \mathrm{C}_{s,\theta} \Big \| \Big ( \prod_{ \small{ \substack{ i \in \ldbrack 1 , n \rdbrack^{>}_{i_0} \\ j \in \ldbrack 1 , 3 \rdbrack }Ê} } ( \mathrm{h}_i^{j} )^{ \theta s} \Big ) \Big ( \prod_{ \small { \substack { i\in \ldbrack 1 , n \rdbrack^0_{i_0} \\ j \in \ldbrack 1 , 3 \rdbrack } } } ( \mathrm{h}_i^{j} )^{\frac{1}{12}+\theta(s-\frac{1}{12})} ) \Big ) G \Big \|_2 \Big \| \Big ( \sum_{ i \in \ldbrack 1, n \rdbrack_{i_0} }H_{f,i} + 1 \Big )^{\frac{n-1}{2}( 1 - \theta )} \varphi \Big \| . \label{eq:estim_lm1_2}
\end{align}
\end{proposition}
\begin{proof}
It follows from Lemma \ref{lm:second} that, for all $s > 1/2$, there exists $\mathrm{C}_{s} > 0$ such that, for all $i_0 \in \ldbrack 1 , n \rdbrack$, $G \in \mathcal{D} ( \prod_{ \small{ \substack{ i \in \ldbrack 1 , n \rdbrack_{i_0} \\ j \in \ldbrack 1 , 3 \rdbrack } } } ( \mathrm{h}_i^{j}  )^s )$ and $\varphi  , \psi\in \mathcal{H}$,
\begin{align}
\big |Ê\big \langle \varphi , H_I(G) \psi \big \rangle \big | \le \mathrm{C}_{s} \Big \| \prod_{ \small{ \substack{ i \in \ldbrack 1 , n \rdbrack_{i_0} \\ j \in \ldbrack 1 , 3 \rdbrack } } } ( \mathrm{h}_i^{j}  )^s G \Big \|_2 \| \varphi \| \|Ê\psi \|.  \label{eq:new_estim}
\end{align}
Considering the map
\begin{align*}
f:z \mapsto \Big \langle \varphi \Big| H_I \Big ( \Big ( \prod_{ \small{ \substack{ i \in \ldbrack 1 , n \rdbrack^{>}_{i_0} \\ j \in \ldbrack 1 , 3 \rdbrack }} } ( \mathrm{h}_i^{j} )^{-zs} \Big ) \Big ( \prod_{ \small { \substack { i\in \ldbrack 1 , n \rdbrack^0_{i_0} \\ j \in \ldbrack 1 , 3 \rdbrack } } } ( \mathrm{h}_i^{j} )^{-\frac{1}{12}-z(s-\frac{1}{12})} ) \Big ) \tilde G \Big ) \Big ( \sum_{ i \in \ldbrack 1, n \rdbrack_{i_0} }H_{f,i} + 1 \Big )^{-\frac{n-1}{2}( 1 - z )}  \psi \Big \rangle ,
\end{align*}
on $\{ z \in \mathbb{C}, 0 \leq \mathrm{Re}(z) \leq 1 \}$, it suffices to proceed in the same way as in Lemma \ref{lm:1}, using Hadamard's three lines lemma together with \eqref{eq:-1_0} and \eqref{eq:new_estim}.
\end{proof}
\begin{remark}\label{rk:refined_rel_bound}
The constants $\mathrm{C}_{s,\theta}$ in Propositions \ref{lm:1} and \ref{prop:relative_bound} depend on the positive masses $m_i$, $i \in \ldbrack 1 , n \rdbrack^>$. More precisely, inspecting the proof, one can write
\begin{equation*}
\mathrm{C}_{s,\theta} = \tilde{\mathrm{C}}_{s,\theta} ( \prod_{i\in\ldbrack 1 , n \rdbrack^>} m_i^{-\theta/2} ) ,
\end{equation*}
where $\tilde{\mathrm{C}}_{s,\theta}$ is independent of $m_i$. In the next section, to establish the existence of a ground state for the Hamiltonian $H$, it will be important to have relative bounds that hold uniformly in the masses $m_i$, for $i$ in some subset $\mathrm{I} \subset \ldbrack 1 , n \rdbrack^>$. To obtain such bounds, it suffices to modify the proof of Proposition \ref{lm:1} by replacing the estimate $\| \omega_i(k_i)^{-1/2} u \|_2 \le m_i^{-1/2} \| u \|_2$, for $i \in \mathrm{I}$, by $\| \omega_i(k_i)^{-1/2} u \|_2 \le \| |k_i|^{-1/2} u \|_2$. This leads to the following more precise relative bounds
\begin{align}
& \big \|ÊH_I ( G ) \varphi \big \| \le \tilde{\mathrm{C}}_{s,\theta} \min_{\mathrm{I} \subset \ldbrack 1 , n \rdbrack^{>}_{i_0} } \Big \| \Big ( \prod_{ \small{ \substack{ i \in \ldbrack 1 , n \rdbrack^>_{i_0} \setminus \mathrm{I} \\ j \in \ldbrack 1 , 3 \rdbrack }Ê} } m_i^{-\frac{\theta}{2}}( \mathrm{h}_i^{j} )^{ \theta s} \Big ) \Big ( \prod_{ \small { \substack { i\in \ldbrack 1 , n \rdbrack^0_{i_0} \cup \mathrm{I} \\ j \in \ldbrack 1 , 3 \rdbrack } } } ( \mathrm{h}_i^{j} )^{\frac{1}{12}+\theta(s-\frac{1}{12})} ) \Big ) G \Big \|_2 \notag \\
&\qquad \qquad \qquad \qquad \times \Big \| \Big ( \sum_{ i \in \ldbrack 1, n \rdbrack_{i_0} }H_{f,i} + 1 \Big )^{\frac{n-1}{2}( 1 - \theta )} \varphi \Big \| , \label{eq:estim_lm1_2_4}
\end{align}
where $\tilde{\mathrm{C}}_{s,\theta}$ is independent of the masses $m_i$'s.
\end{remark}

\subsection{Self-adjointness of $H$}

Using Proposition \ref{prop:relative_bound}, we are now able to prove Theorem \ref{th:main}.
%
%
%\begin{theorem}\label{Th_1}
%Let $i_0 \in \ldbrack 1 , n \rdbrack$ and $\varepsilon > 0$. Suppose that, for all $p \in \ldbrack 1 , n \rdbrack$ and all integers $i_1,\dots,i_n$ such that $\{i_1,\dots,i_n\}=\{1,\dots,n\}$, $i_1<\cdots<i_p$, $i_{p+1}<\cdots<i_{n}$ and $i_1<i_{p+1}$,
%\begin{align*}
%G^{(p)}_{i_1,\dots,i_n} \in \mathcal{D} \Big ( \Big ( \prod_{ \small{ \substack{ i \in \ldbrack 1 , n \rdbrack^{>}_{i_0} \\ j \in \ldbrack 1 , 3 \rdbrack }Ê} } ( \mathrm{h}_i^{j} )^{ \frac12 - \frac{1}{n-1} + \varepsilon } \Big ) \Big ( \prod_{ \small{ \substack{ i \in \ldbrack 1 , n \rdbrack^{0}_{i_0} \\ j \in \ldbrack 1 , 3 \rdbrack }Ê} } ( \mathrm{h}_i^{j} )^{ \frac12 - \frac56 \frac{1}{n-1} +\varepsilon }   \Big ) \Big ).
%\end{align*}
%Then the quadratic form $H$ in \eqref{eq:defH_0} extends to a self-adjoint operator with domain
%\begin{align*}
%\mathcal{D}( H ) = \mathcal{D}( H_f ).
%\end{align*}
%\end{theorem}
%
%
\begin{proof}[Proof of Theorem \ref{th:main}]
Let $\varepsilon > 0$ and consider a term $H^{(p)}_{I,i_1,\dots,i_n}( G^{(p)}_{i_1,\dots,i_n} )$ occurring in the sum defining $H_I$, see \eqref{eq:defHI}. Possibly changing variables, we can assume without loss of generality that $H^{(p)}_{I,i_1,\dots,i_n}( G^{(p)}_{i_1,\dots,i_n} )$ is given by an expression of the form \eqref{eq:0}, hence, to shorten notations, we write $H^{(p)}_{I,i_1,\dots,i_n}( G^{(p)}_{i_1,\dots,i_n} ) = H_I(G)$, with 
\begin{align*}
G \in \mathcal{D} \Big ( \Big ( \prod_{ \small{ \substack{ i \in \ldbrack 1 , n \rdbrack^{>}_{i_0} \\ j \in \ldbrack 1 , 3 \rdbrack }Ê} } ( \mathrm{h}_i^{j} )^{ \frac12 - \frac{1}{n-1} + \varepsilon } \Big ) \Big ( \prod_{ \small{ \substack{ i \in \ldbrack 1 , n \rdbrack^{0}_{i_0} \\ j \in \ldbrack 1 , 3 \rdbrack }Ê} } ( \mathrm{h}_i^{j} )^{ \frac12 - \frac56 \frac{1}{n-1} +\varepsilon }   \Big ) \Big ).
\end{align*}

Applying Proposition \ref{prop:relative_bound} with $\theta = 1 - \frac{2(1-\varepsilon)}{n-1}$ and $s = 1/2 + \kappa$, with $\kappa$ small enough, we obtain that
\begin{align}
\big \|ÊH_I ( G ) \varphi \big \| \lesssim & \Big \|  \Big ( \prod_{ \small{ \substack{ i \in \ldbrack 1 , n \rdbrack^{>}_{i_0} \\ j \in \ldbrack 1 , 3 \rdbrack }Ê} } ( \mathrm{h}_i^{j} )^{ \frac12 - \frac{1}{n-1} + \varepsilon } \Big ) \Big ( \prod_{ \small{ \substack{ i \in \ldbrack 1 , n \rdbrack^{0}_{i_0} \\ j \in \ldbrack 1 , 3 \rdbrack }Ê} } ( \mathrm{h}_i^{j} )^{\frac12 - \frac56 \frac{1}{n-1} +\varepsilon }  \Big ) G \Big \|_2 \big \| (H_f+1)^{1-\varepsilon} \varphi \big \| , \label{eq:6}
\end{align}
for all $\varphi \in \mathcal{D}  ( H_f^{1-\varepsilon}  )$. Next, we observe that
\begin{align}
\big \| (H_f+1)^{1-\varepsilon} \varphi \big \| &= \big \langle \varphi , (H_f+1)^{2-2\varepsilon} \varphi \big \rangle^{\frac12} \notag \\
&\le \big ( \mu^2 \big \langle \varphi , (H_f+1)^{2} \varphi \big \rangle + \mathrm{C}_\mu \|Ê\varphi \|^2 \big )^{\frac12} \notag \\
&\le \mu \big \| H_f \varphi \big \| + \mathrm{C}_\mu \|Ê\varphi \| , \label{eq:rha1}
\end{align}
for any $\mu>0$, the first inequality being a consequence of Young's inequality.

Combining \eqref{eq:6} and \eqref{eq:rha1}, since $\mu>0$ can be fixed arbitrarily small, we deduce that $H_I( G )$ is relatively $H_f$-bounded with relative bound $0$. Since the other terms in the sum occurring in \eqref{eq:defHI} can be treated in the same way, we deduce from the Kato-Rellich theorem that, indeed, $H$ extends to a self-adjoint operator satisfying $\mathcal{D}( H ) = \mathcal{D}( H_f )$. Semi-boundedness of $H$ and the fact that any core for $H_f$ is a core for $H$ are other consequences of the Kato-Rellich theorem.
\end{proof}

\section{Existence of a ground state}

In this section, we prove the existence of a ground state for the Hamiltonian $H$ defined by Theorem \ref{th:main}, i.e., we prove Theorem \ref{th:GS}. In Section \ref{subsec:massive}, we begin by studying the simplest case where all fermion fields are supposed to be massive. Next, in Section \ref{subsec:one_massless}, we consider the case where excatly one field is massless, and, in Subsection \ref{subsec:massless}, we consider the general case, using an induction in $l \in \ldbrack 1 , n \rdbrack$, where $l$ represents the number of massless fields involved.

\subsection{Models with only massive fields}\label{subsec:massive}

In this section, we suppose that the masses of all the particles are positive. We set
\begin{equation*}
m := \min_{i \in \ldbrack 1 , n \rdbrack}  m_i >0.
\end{equation*}
We prove the existence of a ground state for $H$ by adapting a method due to \cite{DeGe99_01} (see also \cite{Ta14_01}). The proofs follows closely Section 4.2 of \cite{BaFaGu16_02}, the main difference being that we have to use the relative bounds of Proposition \ref{prop:relative_bound} instead of the usual $N_\tau$ estimates.

Recall that the Fock space $\mathcal{F}$ has been defined in \eqref{eq:def_Fock}. Let
\begin{equation*}
U : \mathcal{F} \to \mathcal{F} \otimes \mathcal{F},
\end{equation*}
be defined by
\begin{align*}
 U \Omega = \Omega \otimes \Omega , \quad  U b^*( f \oplus g ) = ( b^*( f ) \otimes \mathds{1} + ( -1 )^{ N } \otimes b^*( g ) ) U,
\end{align*}
where $N=\mathrm{d}\Gamma( \mathds{1} )$ denotes the number operator in $\mathcal{F}$. The operator $U$ extends by linearity to a unitary map on $\mathcal{F}$. Let $j_0 \in
\mathrm{C}^\infty( [ 0 , \infty ) ; [ 0 , 1 ] )$ be such that $j_0 \equiv 1$ on $[ 0 , 1/2 ]$ and $j_0 \equiv 0$ on $[ 1 , \infty )$, and let $j_\infty = \sqrt{1 - j_0^2Ê}$. For $R > 0$, let $j_\sharp^R := j_\sharp ( y^2 / R^2 )$ on $L^2( \mathbb{R}^3 \times \{ -1/2 , 1/2 \} )$, where $j_\sharp$ stands for $j_0$ or $j_\infty$ and $y = \mathrm{i} \nabla_k$. Let
\begin{align*}
j^R : L^2( \mathbb{R}^3 \times \{ -\frac12 , \frac12 \} ) &\to L^2( \mathbb{R}^3 \times \{ -\frac12 , \frac12 \} ) \oplus L^2( \mathbb{R}^3 \times \{ -\frac12 , \frac12 \} ) , \quad j^R( f ) := ( j_0^R ( f ) , j_\infty^R ( f ) ).
\end{align*}
We then define
\begin{equation*}
\check{\Gamma}(j^R) : \mathcal{F} \to \mathcal{F} \otimes \mathcal{F} , \quad \check{\Gamma}(j^R) := U \Gamma( j^R ) ,
\end{equation*}
where, as usual, for an operator $a$ on $L^2( \mathbb{R}^3 \times \{-1/2,1/2\} )$, the operator $\Gamma( a )$ on $\mathcal{F}$ is defined by its restriction to $\otimes_a^l L^2( \mathbb{R}^3 \times \{-1/2,1/2\} )$ as $\Gamma( a ) = a \otimes \dots \otimes a$, and $\Gamma(a)\Omega=1$.

The ``extended'' Hilbert space is 
\begin{equation*}
\mathcal{H}^{\mathrm{ext}} := \mathcal{H} \otimes \mathcal{H},
\end{equation*}
where, recall, $\mathcal{H} = \otimes_{i=1}^n \mathcal{F}$. The operator $\check{\Gamma}_R: \mathcal{H} \to \mathcal{H}^{\mathrm{ext}}$ is defined by
\begin{equation*}
\check{\Gamma}_R := \otimes_{i=1}^n \check{\Gamma}( j^R ),
\end{equation*}
and the extended Hamiltonian, acting on $\mathcal{H}^{\mathrm{ext}}$, is
\begin{equation*}
H^{\mathrm{ext}} := H \otimes \mathds{1} + \mathds{1} \otimes H_f.
\end{equation*}
Under the conditions of Theorem \ref{th:main}, one verifies, by adapting the proof of that theorem in a straightforward way, that $H_I \otimes \mathds{1}$ is relatively $H^{\mathrm{ext}}$ bounded with relative bound $0$.

Recall that, for $p \in \ldbrack 0 , n \rdbrack$, the set $\mathfrak{J}_p$ has been defined in \eqref{eq:def_Ip}.
\begin{lemma}
\label{LemCommutation}
Let $i_0 \in \ldbrack 1 , n \rdbrack$ and $\varepsilon > 0$. Suppose that, for all $p \in \ldbrack 0 , n \rdbrack$ and all set of integers $\{ i_1,\dots,i_n \} \in \mathfrak{I}_p$,
\begin{align*}
G^{(p)}_{i_1,\dots,i_n} \in \mathcal{D} \Big (  \prod_{ \small{ \substack{ i \in \ldbrack 1 , n \rdbrack_{i_0} \\ j \in \ldbrack 1 , 3 \rdbrack }Ê} } ( \mathrm{h}_i^{j} )^{ \frac12 - \frac{1}{n-1} + \varepsilon }   \Big ).
\end{align*}
Then, for any $\chi \in \mathrm{C}_0^{\infty}(\mathbb{R})$ we have that
\begin{equation}
\chi(H^{\mathrm{ext}})\check{\Gamma}_R-\check{\Gamma}_R \chi(H) \to 0 ,  \quad R \to \infty.
\end{equation}
\end{lemma}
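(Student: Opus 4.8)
The plan is to prove this commutator estimate by a standard functional-calculus argument based on the Helffer--Sj\"ostrand formula, reducing the problem to controlling the commutator of the resolvent $(H^{\mathrm{ext}}-z)^{-1}$ with $\check{\Gamma}_R$. Using an almost-analytic extension $\tilde\chi$ of $\chi$, one writes
\begin{equation*}
\chi(H^{\mathrm{ext}})\check{\Gamma}_R-\check{\Gamma}_R\chi(H) = \frac{1}{\pi}\int_{\mathbb{C}} \partial_{\bar z}\tilde\chi(z)\,(H^{\mathrm{ext}}-z)^{-1}\big(\check{\Gamma}_R H - H^{\mathrm{ext}}\check{\Gamma}_R\big)(H-z)^{-1}\,\mathrm{d} z\wedge\mathrm{d}\bar z,
\end{equation*}
so that everything hinges on showing that the \emph{intertwining error} $\check{\Gamma}_R H - H^{\mathrm{ext}}\check{\Gamma}_R$, suitably sandwiched between resolvents, tends to $0$ as $R\to\infty$. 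First I would split $H=H_f+H_I$ and treat the free and interacting parts separately.

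For the free part, the key algebraic fact is that $\check\Gamma_R$ nearly intertwines $H_f$ with $H^{\mathrm{ext}}$ up to an error governed by the commutator $[\omega_i(k_i),j^R_\sharp]$, equivalently by terms containing $\mathrm{d}\Gamma$ of symbols like $[\,\omega_i, j_\sharp(y^2/R^2)\,]$. Since $j_\sharp(y^2/R^2)$ varies on the scale $y\sim R$ and $\omega_i$ is smooth with bounded derivatives (here the positivity of the masses, giving smooth dispersion relations, is used), these commutators are $O(R^{-1})$ in the appropriate operator sense and produce a so-called $\check\Gamma$-remainder of the form $\mathrm{d}\check\Gamma(j^R,\mathrm{ad\text{-}error})$ that vanishes as $R\to\infty$ when localized in energy by the resolvents. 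This part is essentially the same computation as in Section 4.2 of \cite{BaFaGu16_02} and I would follow it closely, the only adaptation being bookkeeping over the $n$ tensor factors.

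For the interaction part, the point is that $\check\Gamma_R$ maps the single-particle space into the direct sum $L^2\oplus L^2$ via $j^R=(j_0^R,j_\infty^R)$, and one must compare $\check\Gamma_R H_I(G)$ with $(H_I(G)\otimes\mathds 1)\check\Gamma_R$. Expanding $H_I(G)$ in creation and annihilation operators and pushing $\check\Gamma_R$ through, each $b^\sharp_i(\xi_i)$ gets replaced by the pair $(j_0^R b^\sharp, j_\infty^R b^\sharp)$, and the leading term reproduces $H_I(G)\otimes\mathds 1$ while the remainder collects contributions in which at least one factor carries $j_\infty^R$ or a commutator $[j_\sharp^R,\cdot]$. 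The crucial estimate is that these remainder kernels, such as $(j_\infty^R\otimes\cdots)G$, converge to $0$ in the norm $\|(\prod (\mathrm h_i^j)^{1/2-1/(n-1)+\varepsilon})\,\cdot\,\|_2$ appearing in Proposition~\ref{prop:relative_bound}; indeed $j_\infty^R\to 0$ strongly on functions in the domain of the relevant power of $\mathrm h_i^j$, and multiplication by $j_\infty^R$ together with the fixed regularizing weights is uniformly bounded, so dominated convergence applies. Invoking the relative bound of Proposition~\ref{prop:relative_bound}, which converts this $L^2$-type smallness of the kernel into operator smallness relative to $(H_f+1)^{(n-1)(1-\theta)/2}$, and absorbing the latter into the resolvents $(H^{\mathrm{ext}}-z)^{-1}$ and $(H-z)^{-1}$ (permissible because $H_I\otimes\mathds 1$ is $H^{\mathrm{ext}}$-bounded with relative bound $0$, as noted just before the lemma), yields the vanishing of the interaction remainder.

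The main obstacle I anticipate is not any single analytic estimate but the careful organization of the remainder terms so that each one is genuinely controlled by a kernel norm to which Proposition~\ref{prop:relative_bound} applies: one must ensure that after pushing $\check\Gamma_R$ through the normal-ordered product, every error term still has the correct operator structure (the right number of $b^*$'s and $b$'s in each tensor factor) to be bounded by $H^{\mathrm{ext}}$-relative bounds, and that the weights $(\mathrm h_i^j)^s$ commute past $j^R_\sharp$ only up to controllable commutators. Managing this interplay between the momentum-space regularizing weights $\mathrm h_i^j$ and the position-space cutoffs $j^R_\sharp=j_\sharp(y^2/R^2)$, which live in conjugate variables, is the delicate technical heart of the argument; I would handle it by exploiting that $\mathrm h_i^j$ is the harmonic oscillator, so that its powers map Schwartz-type vectors continuously and their commutators with $j_\sharp(y^2/R^2)$ are again $O(R^{-1})$ in the relevant graph norms, whence all error terms vanish as $R\to\infty$.
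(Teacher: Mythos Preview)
Your proposal is correct and follows essentially the same route as the paper's proof: Helffer--Sj\"ostrand reduction, splitting into a free part (controlled by $[\omega_i,j^R_\sharp]=O(R^{-1})$ as in \cite{BaFaGu16_02}) and an interaction part, with the latter handled by computing $(H_I(G)\otimes\mathds 1)\check\Gamma_R-\check\Gamma_R H_I(G)$ explicitly and bounding the resulting modified kernels via Proposition~\ref{prop:relative_bound}. Your anticipated obstacle---controlling the interplay of the harmonic-oscillator weights $(\mathrm h_i^j)^s$ with the position-space cutoffs $j_\sharp(y^2/R^2)$---is exactly what the paper dispatches in one line by invoking ``pseudo-differential calculus'' together with the domain hypothesis on $G$; the paper's explicit decomposition of the interaction commutator has two kinds of remainder terms (one with kernel $(1-\prod_i j_0(x_i^2/R^2))G$ and all $b^{\sharp,0}$'s, and a sum over configurations with at least one $b^{\sharp,\infty}$ carrying kernel $(\prod_i j_{\alpha_i}(x_i^2/R^2))G$), but no genuine commutators $[j_\sharp^R,\cdot]$ appear in the interaction piece---those live only in the free part.
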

\begin{proof}
The proof can be adapted from that of \cite[Lemma 4.3]{BaFaGu16_02}. The main difference is that we have to use the relative bound of Proposition \ref{prop:relative_bound} instead of the usual $N_\tau$ estimates. Namely, considering, as in the proof of Theorem \ref{th:main}, a particular term of $H_I$ of the form $H^{(p)}_{I,i_1,\dots,i_n}( G^{(p)}_{i_1,\dots,i_n} ) = H_I(G)$, with $H_I(G)$ given by \eqref{eq:0}, one can compute
\begin{align*}
&( H_I(G)\otimes \mathds{1} ) \check{\Gamma}_R - \check{\Gamma}_R H_I(G) \\
 & =\int \Big(1-\prod_{i=1}^{n}j_0(\frac{x_i^2}{R^2})\Big)G(\xi_1,\dots,\xi_n)b^{*,0}_1(\xi_1) \dots b^{*,0}_p(\xi_p)b^0_{p+1}(\xi_{p+1}) \dots b^0_n(\xi_n) d \xi_1 \cdots d\xi_n\\
& + \underset{\exists j, \alpha_j \neq 0 }{\sum_{ \{\alpha_i\} \in \{0,\infty\}^n }} \int \Big(\prod_{i=1}^{n}j_{\alpha_i}(\frac{x_i^2}{R^2})\Big)G(\xi_1,\dots,\xi_n)b^{*,\alpha_1}_1(\xi_1) \dots b^{*,\alpha_p}_p(\xi_p)b^{\alpha_{p+1}}_{p+1}(\xi_{p+1}) \dots b^{\alpha_n}_n(\xi_n) \\
&\qquad \qquad \qquad \quad d \xi_1 \cdots d\xi_n.
\end{align*}
Here we have set $x_i = \mathrm{i} \nabla_{k_i}$, $b^{\sharp,0}_i := b^{\sharp}_i \otimes \mathds{1}$ and $b^{\sharp,\infty}_i = (-1)^{N_i} \otimes b^{\sharp}$, where $b^\sharp$ stands for $b$ or $b^*$ and $N_i=\int b^*(\xi_i) b( \xi_i ) d \xi_i$ is the number operator in the $i^{\text{th}}$ Fock space. The subscript $\{\alpha_i\} \in \{0,\infty\}^n$, $\exists j, \alpha_j \neq 0$ means that, for each term of the sum, at least one of the creation of annihilation operator $b_{i}^{\sharp,\alpha_i}$ is equal to $b_{i}^{\infty,\alpha_i}$.

Proceeding as in the proof of Proposition \ref{prop:relative_bound}, one then verifies that
\begin{align}
& \big \|Ê\big ( ( H_I(G)\otimes \mathds{1} ) \check{\Gamma}_R - \check{\Gamma}_R H_I(G)\big ) \varphi \big \| \notag \\
&\le \mathrm{C}_{s,\theta} \Big \{ \Big \| \Big ( \prod_{ \small{ \substack{ i \in \ldbrack 1 , n \rdbrack_{i_0} \\ j \in \ldbrack 1 , 3 \rdbrack }Ê} } ( \mathrm{h}_i^{j} )^{ \theta s} \Big ) \Big(1-\prod_{i=1}^{n}j_0(\frac{x_i^2}{R^2})\Big) G \Big \|_2 \notag \\
& \qquad \qquad + \underset{\exists j, \alpha_j \neq 0 }{\sum_{\{\alpha_i\} \in \{0,\infty\}^n }} \Big \| \Big ( \prod_{ \small{ \substack{ i \in \ldbrack 1 , n \rdbrack_{i_0} \\ j \in \ldbrack 1 , 3 \rdbrack }Ê} } ( \mathrm{h}_i^{j} )^{ \theta s} \Big ) \Big(\prod_{i=1}^{n}j_{\alpha_i}(\frac{x_i^2}{R^2})\Big) G \Big \|_2 \Big \} \notag \\
& \qquad \times  \Big \| \Big ( \sum_{ i \in \ldbrack 1, n \rdbrack_{i_0} }( H_{f,i} \otimes \mathds{1} + \mathds{1} \otimes H_{f,i} ) + 1 \Big )^{(n-1)( 1 - \theta )} \varphi \Big \| , \label{eq:estim_lm1_2_3}
\end{align}
for any $s>1/2$ and $0\le\theta\le 1$. Fixing $\theta$ and $s$ as in the proof of Theorem \ref{th:main}, and using that $H_{f,i} \otimes \mathds{1} + \mathds{1} \otimes H_{f,i}$ is relatively $H^{\mathrm{ext}}$ bounded, one deduces from the previous estimate that
\begin{align}
& \big \|Ê\big ( ( H_I(G)\otimes \mathds{1} ) \check{\Gamma}_R - \check{\Gamma}_R H_I(G)\big ) ( H^{\mathrm{ext}} + i )^{-1} \big \| \notag \\
&\le \mathrm{C}_{s,\theta} \Big \{ \Big \| \Big ( \prod_{ \small{ \substack{ i \in \ldbrack 1 , n \rdbrack_{i_0} \\ j \in \ldbrack 1 , 3 \rdbrack }Ê} } ( \mathrm{h}_i^{j} )^{ \frac12 - \frac{1}{n-1} + \varepsilon } \Big ) \Big(1-\prod_{i=1}^{n}j_0(\frac{x_i^2}{R^2})\Big) G \Big \|_2 \notag \\
& \qquad \qquad + \underset{\exists j, \alpha_j \neq 0 }{\sum_{\{\alpha_i\} \in \{0,\infty\}^n }} \Big \| \Big ( \prod_{ \small{ \substack{ i \in \ldbrack 1 , n \rdbrack_{i_0} \\ j \in \ldbrack 1 , 3 \rdbrack }Ê} } ( \mathrm{h}_i^{j} )^{ \frac12 - \frac{1}{n-1} + \varepsilon } \Big ) \Big(\prod_{i=1}^{n}j_{\alpha_i}(\frac{x_i^2}{R^2})\Big) G \Big \|_2 \Big \} .
\end{align}
Using pseudo-differential calculus together with the fact that $G$ belongs to the domain of $\prod_{ \small{ \substack{ i \in \ldbrack 1 , n \rdbrack_{i_0} \\ j \in \ldbrack 1 , 3 \rdbrack }Ê} } ( \mathrm{h}_i^{j} )^{ \frac12 - \frac{1}{n-1} + \varepsilon }$, it is not difficult to see that the right-hand-side of the previous equation goes to $0$, as $R \to \infty$. Since the other terms occurring in the definition of $H_I$ can be treated in the same way, the rest of the proof is a straightforward adaptation of \cite[Lemma 4.3]{BaFaGu16_02}.
\end{proof}
Given Lemma \ref{LemCommutation}, one deduces the location of the essential spectrum of $H$ as stated in the following proposition. Again, the proof is a straightforward adaptation of a corresponding result in \cite{BaFaGu16_02} (see \cite[Theorem 3.5]{BaFaGu16_02}). Details are left to the reader.
\begin{proposition}\label{prop:massive}
Let $i_0 \in \ldbrack 1 , n \rdbrack$ and $\varepsilon > 0$. Suppose that, for all $p \in \ldbrack 0 , n \rdbrack$ and all set of integers $\{i_1,\dots,i_n\} \in \mathfrak{I}_p$,
\begin{align*}
G^{(p)}_{i_1,\dots,i_n} \in \mathcal{D} \Big ( \prod_{ \small{ \substack{ i \in \ldbrack 1 , n \rdbrack_{i_0} \\ j \in \ldbrack 1 , 3 \rdbrack }Ê} } ( \mathrm{h}_i^{j} )^{ \frac12 - \frac{1}{n-1} + \varepsilon }  \Big ).
\end{align*}
The essential spectrum of $H$ is given by
\begin{equation*}
\sigma_{\mathrm{ess}}(H)=[E+m ,\infty),
\end{equation*}
where $E=\inf \sigma( H )$ and $m = \min_{i \in \ldbrack 1 , n \rdbrack}(m_i)>0$. In particular, $E$ is a (discrete) eigenvalue of $H$.
\end{proposition}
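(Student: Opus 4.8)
The plan is to establish the two inclusions $[E+m,\infty)\subseteq\sigma_{\mathrm{ess}}(H)$ and $\sigma_{\mathrm{ess}}(H)\subseteq[E+m,\infty)$ separately, the second being the substantial part. Throughout, the decisive structural fact is that the extended Hamiltonian $H^{\mathrm{ext}}=H\otimes\mathds{1}+\mathds{1}\otimes H_f$ splits according to whether the second factor carries the vacuum or at least one particle. Denoting by $P_\Omega$ the orthogonal projection of $\mathcal{H}$ onto the vacuum, one has $H^{\mathrm{ext}}=H\otimes\mathds{1}$ on $\mathrm{Ran}(\mathds{1}\otimes P_\Omega)$, whereas on $\mathrm{Ran}(\mathds{1}\otimes P_\Omega^{\perp})$ the second summand obeys $\mathds{1}\otimes H_f\geq m(\mathds{1}\otimes P_\Omega^{\perp})$, since any particle of the lightest species costs energy at least $m$. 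Combined with $H\otimes\mathds{1}\geq E$, this gives $H^{\mathrm{ext}}\geq E+m$ on $\mathrm{Ran}(\mathds{1}\otimes P_\Omega^{\perp})$, hence, for any $\chi\in \mathrm{C}_0^\infty(\mathbb{R})$ with $\mathrm{supp}(\chi)\subset(-\infty,E+m)$, the identity $\chi(H^{\mathrm{ext}})=\chi(H)\otimes P_\Omega$ (the part supported on $\mathrm{Ran}(\mathds{1}\otimes P_\Omega^{\perp})$ being annihilated because $\chi$ vanishes on $[E+m,\infty)$).

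For the inclusion $\sigma_{\mathrm{ess}}(H)\subseteq[E+m,\infty)$, I would argue by contradiction, following Derezi\'nski--G\'erard. Suppose $\lambda\in\sigma_{\mathrm{ess}}(H)$ with $\lambda<E+m$, choose $\chi\in \mathrm{C}_0^\infty(\mathbb{R})$ equal to $1$ near $\lambda$ with support in $(-\infty,E+m)$, and pick a Weyl sequence $(\psi_n)$ with $\|\psi_n\|=1$, $\psi_n\rightharpoonup 0$ and $\chi(H)\psi_n=\psi_n+o_n(1)$. Using that $\check{\Gamma}_R$ is an isometry, $\check{\Gamma}_R^*\check{\Gamma}_R=\mathds{1}$, inserting Lemma \ref{LemCommutation}, and invoking $\chi(H^{\mathrm{ext}})=\chi(H)\otimes P_\Omega$, I would write
\[
\psi_n=\check{\Gamma}_R^*\check{\Gamma}_R\chi(H)\psi_n+o_n(1)=\check{\Gamma}_R^*\big(\chi(H)\otimes P_\Omega\big)\check{\Gamma}_R\psi_n+o_R(1)+o_n(1),
\]
where $o_R(1)\to 0$ as $R\to\infty$ uniformly in $n$. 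Granting that, for each fixed $R$, the operator $\check{\Gamma}_R^*(\chi(H)\otimes P_\Omega)\check{\Gamma}_R$ is compact, applying it to the weakly null sequence makes its contribution $o_n(1)$, so $1=\|\psi_n\|\le \|\check{\Gamma}_R^*(\chi(H)\otimes P_\Omega)\check{\Gamma}_R\psi_n\|+o_R(1)$; fixing $R$ large and then letting $n\to\infty$ gives $1\le o_R(1)$, which is absurd.

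The main obstacle is precisely the compactness of $\check{\Gamma}_R^*(\chi(H)\otimes P_\Omega)\check{\Gamma}_R$. Here I would use that composing with $P_\Omega$ on the second factor collapses $\check{\Gamma}_R$ to the localization $\Gamma(j_0^R)$, so that this operator has the form $\Gamma(j_0^R)^*\,\chi(H)\,\Gamma(j_0^R)$, where $\Gamma(j_0^R)$ confines every particle to the ball $\{|y|\lesssim R\}$. Since $H_I$ is relatively $H_f$-bounded with relative bound $0$ (Theorem \ref{th:main}, via Proposition \ref{prop:relative_bound}), the energy cutoff $\chi(H)$ controls $H_f$ and hence the number of particles, reducing the analysis to finitely many particle sectors up to an arbitrarily small error; in each sector the momenta are confined to a compact set because $\omega_i(k)\ge m>0$ forces $|k|$ bounded on $\mathrm{supp}(\chi)$, while $\Gamma(j_0^R)$ confines the positions, and a Rellich-type argument yields compactness. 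This is exactly where the refined relative bounds of Proposition \ref{prop:relative_bound} replace the usual $N_\tau$ estimates, as in Lemma \ref{LemCommutation}; apart from this substitution, the argument transcribes \cite[Theorem 3.5]{BaFaGu16_02} line by line.

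Finally, for the reverse inclusion $[E+m,\infty)\subseteq\sigma_{\mathrm{ess}}(H)$, I would build explicit Weyl sequences. Fixing $\lambda\ge E+m$ and a species $i$ of minimal mass $m_i=m$, choose a normalized approximate ground state $(\phi_k)$ with $\|(H-E)\phi_k\|\to 0$ and a one-particle state $g$ of species $i$ whose momentum support concentrates near some $k_0$ with $\omega_i(k_0)=\lambda-E$ (possible since $\lambda-E\ge m=\inf_k\omega_i(k)$). Translating $g$ to spatial infinity, $g_R(\cdot)=g(\cdot-Re)$, the vectors $b_i^*(g_R)\phi_k$ converge weakly to $0$, and, using $[H_f,b_i^*(g_R)]=b_i^*(\omega_i g_R)\approx(\lambda-E)b_i^*(g_R)$ together with the vanishing as $R\to\infty$ of the commutator of $b_i^*(g_R)$ with $H_I$ (a consequence of the spatial decay encoded in the regularity hypotheses on the kernels), they form an approximate eigensequence for $H$ at energy $\lambda$; a diagonal extraction then produces a genuine Weyl sequence, so $\lambda\in\sigma_{\mathrm{ess}}(H)$. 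Since $m>0$, the resulting equality $\sigma_{\mathrm{ess}}(H)=[E+m,\infty)$ forces $E=\inf\sigma(H)$ to be isolated below the essential spectrum, hence a discrete eigenvalue of finite multiplicity.
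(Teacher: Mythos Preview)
Your proposal is correct and follows essentially the same route as the paper. The paper does not spell out a proof but simply refers to \cite[Theorem 3.5]{BaFaGu16_02}, noting that the only modification needed is the use of the refined relative bounds of Proposition \ref{prop:relative_bound} in place of the classical $N_\tau$ estimates, together with Lemma \ref{LemCommutation}; this is exactly the Derezi\'nski--G\'erard geometric localization argument you outline, with the same extended Hamiltonian, the same identity $\chi(H^{\mathrm{ext}})=\chi(H)\otimes P_\Omega$ for $\mathrm{supp}(\chi)\subset(-\infty,E+m)$, the same compactness of $\Gamma(j_0^R)\chi(H)\Gamma(j_0^R)$, and the same Weyl-sequence construction for the reverse inclusion.
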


\subsection{Models with one massless field}\label{subsec:one_massless}

In this section, we suppose that one field is massless, while all the other ones are massive. To fix ideas, we suppose that $m_1=0$ and $m = \min_{ i \in \ldbrack 2 , n \rdbrack } m_i > 0$. It will be convenient to write the total Hamiltonian $H$ in \eqref{eq:defH_0} as $H=H_{m_1=0}$, in other words,
\begin{equation*}
H_{m_1=0} = H_{f,m_1=0} + H_I = \mathrm{d}\Gamma( |k_1| )+\sum_{i=2}^n \mathrm{d}\Gamma( \sqrt{k_i^2+m_i^2} ) + H_I , \quad m_i > 0 , \quad 2 \le i \le n ,
\end{equation*}
where $H_I$ is given by \eqref{eq:defHI}. Obviously the situation is identical if $m_i=0$ for some $i \in \ldbrack 1 , n \rdbrack$ while all the other $m_i$'s are positive.

In order to prove that the Hamiltonian $H_{m_1=0}$ has a ground state, we follow the strategy of \cite{GrLiLo01_01}. First, we approximate $H_{m_1=0}$ by a family of operators $H_{m_1}$, where
\begin{equation*}
H_{m_1} = H_{f,m_1} + H_I = \sum_{i=1}^n \mathrm{d}\Gamma( \sqrt{k_i^2+m_i^2} ) + H_I , \quad m_i > 0 , \quad 1 \le i \le n ,
\end{equation*}
then we let $m_1 \to 0$. Proposition \ref{prop:massive} shows that $H_{m_1}$ has a ground state $\Phi_{m_1}$. The strategy then consists in showing that $\Phi_{m_1}$ converges strongly, as $m_1\to 0$, to a (non-vanishing) ground state of $H_{m_1=0}$.

We set $E_{m_1}:=\inf \sigma( H_{m_1} )$ and $E_{m_1=0}:=\inf \sigma( H_{m_1=0} )$.
\begin{proposition}\label{minimsequence}
Let $i_0 \in \ldbrack 1 , n \rdbrack$ and $\varepsilon > 0$. Suppose that, for all $p \in \ldbrack 0 , n \rdbrack$ and all set of integers $\{i_1,\dots,i_n\} \in \mathfrak{I}_p$,
\begin{align*}
G^{(p)}_{i_1,\dots,i_n} \in \mathcal{D} \Big ( \Big ( \prod_{ \small{ \substack{ i \in \ldbrack 2 , n \rdbrack_{i_0} \\ j \in \ldbrack 1 , 3 \rdbrack }Ê} } ( \mathrm{h}_i^{j} )^{ \frac12 - \frac{1}{n-1} + \varepsilon } \Big ) \Big ( \prod_{ \small{ \substack{ i \in \ldbrack 1 , 1 \rdbrack_{i_0} \\ j \in \ldbrack 1 , 3 \rdbrack } }Ê}  ( \mathrm{h}_i^{j} )^{ \frac12 - \frac56 \frac{1}{n-1} +\varepsilon }   \Big ) \Big ) .
\end{align*}
Then $E_{m_1} \to E_{m_1=0}$ as $m_1 \downarrow 0$. Moreover,
\begin{equation}
E_{m_1=0} = \lim_{ m_1 \downarrow 0 } \langle \Phi_{m_1} , H_{m_1=0} \Phi_{m_1} \rangle. \label{eq:Em1=lim}
\end{equation}
\end{proposition}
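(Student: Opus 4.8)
The plan is to exploit that $H_{m_1}$ and $H_{m_1=0}$ share the same interaction $H_I$ and differ only in their free parts. First I would record the operator identity
\begin{equation*}
H_{m_1} - H_{m_1=0} = \mathrm{d}\Gamma\big( \sqrt{k_1^2+m_1^2} - |k_1| \big),
\end{equation*}
and observe that, since $0 \le \sqrt{k_1^2+m_1^2}-|k_1| = m_1^2 \big(\sqrt{k_1^2+m_1^2}+|k_1|\big)^{-1} \le m_1$, this yields the two-sided bound $0 \le H_{m_1}-H_{m_1=0}\le m_1 N_1$, where $N_1$ denotes the number operator of the first field. In particular $H_{m_1=0}\le H_{m_1}\le H_{m_1'}$ for $0<m_1\le m_1'$, in the sense of quadratic forms on the common form domain, so by the min-max principle $m_1\mapsto E_{m_1}$ is nondecreasing and bounded below by $E_{m_1=0}$. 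Hence $\lim_{m_1\downarrow 0}E_{m_1}$ exists and satisfies $\lim_{m_1\downarrow 0}E_{m_1}\ge E_{m_1=0}$.

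For the reverse inequality I would argue with trial states. By Theorem \ref{th:main}, any core for $H_f$ is a core for $H_{m_1=0}$, so the finite-particle vectors in $\hat{\otimes}_{i=1}^n\mathcal{F}_{\mathrm{fin}}(\mathcal{S})$ form a form core for $H_{m_1=0}$, and each such vector has finite $N_1$-expectation. Given $\delta>0$, pick a normalized such vector $\psi_\delta$ with $\langle \psi_\delta, H_{m_1=0}\psi_\delta\rangle \le E_{m_1=0}+\delta$. Then
\begin{equation*}
E_{m_1} \le \langle \psi_\delta, H_{m_1}\psi_\delta\rangle = \langle \psi_\delta, H_{m_1=0}\psi_\delta\rangle + \langle \psi_\delta, (H_{m_1}-H_{m_1=0})\psi_\delta\rangle \le E_{m_1=0}+\delta + m_1 \langle \psi_\delta, N_1\psi_\delta\rangle.
\end{equation*}
Letting $m_1\downarrow 0$ and then $\delta\downarrow 0$ gives $\limsup_{m_1\downarrow 0}E_{m_1}\le E_{m_1=0}$, which combined with the previous step yields $E_{m_1}\to E_{m_1=0}$ as $m_1\downarrow 0$.

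The identity \eqref{eq:Em1=lim} then follows from a squeeze argument, with no further estimates. Since $\Phi_{m_1}$ is the normalized ground state of $H_{m_1}$, it lies in $\mathcal{D}(H_{m_1})$, hence in the form domain $\mathcal{Q}(H_{m_1})\subset\mathcal{Q}(H_{m_1=0})$ (the inclusion because $H_{m_1=0}\le H_{m_1}$), so $\langle\Phi_{m_1}, H_{m_1=0}\Phi_{m_1}\rangle$ is well defined and
\begin{equation*}
E_{m_1=0} \le \langle \Phi_{m_1}, H_{m_1=0}\Phi_{m_1}\rangle \le \langle \Phi_{m_1}, H_{m_1}\Phi_{m_1}\rangle = E_{m_1},
\end{equation*}
the first inequality by the variational characterization of $E_{m_1=0}$, the second by $H_{m_1=0}\le H_{m_1}$, and the equality because $H_{m_1}\Phi_{m_1}=E_{m_1}\Phi_{m_1}$. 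As both outer bounds converge to $E_{m_1=0}$ by the first part, so does the middle term.

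The argument is elementary once the monotonicity $H_{m_1=0}\le H_{m_1}$ is in place, so I do not expect a serious obstacle; the points requiring care are purely of bookkeeping nature, namely verifying the form-domain inclusion $\mathcal{Q}(H_{m_1})\subset\mathcal{Q}(H_{m_1=0})$ so that the comparisons and the evaluation $\langle\Phi_{m_1},H_{m_1=0}\Phi_{m_1}\rangle$ are legitimate, and confirming that finite-particle vectors form a form core with finite $N_1$-expectation. I would emphasize that this route deliberately avoids any quantitative control of $\langle \Phi_{m_1}, N_1 \Phi_{m_1}\rangle$: the refined number estimates and the stronger infrared conditions \eqref{eq:cond_GS_1} and \eqref{eq:cond_GS_2} are not needed here, but will instead enter the subsequent step establishing the strong convergence of $\Phi_{m_1}$ to a nonzero ground state of $H_{m_1=0}$.
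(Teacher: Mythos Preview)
Your proof is correct and follows essentially the same route as the paper: monotonicity of $m_1\mapsto E_{m_1}$ from $H_{m_1=0}\le H_{m_1}$, a trial-state argument using $H_{m_1}\le H_{m_1=0}+m_1 N_1$, and the same squeeze for \eqref{eq:Em1=lim}. The only difference is that the paper produces a trial state with finite $N_1$-expectation by spectrally truncating an arbitrary near-minimizer and invoking Proposition~\ref{lm:1} to control the truncation error, whereas you pick the trial state directly from the finite-particle core furnished by Theorem~\ref{th:main}; your shortcut is legitimate and slightly more direct.
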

\begin{proof}
For $m_1>m'_1>0$, we have that $H_{m_1} \ge H_{m'_1} \ge H$. This implies that the map $m_1 \mapsto E_{m_1}$ is non-decreasing and bounded below by $E_{m_1=0}$. Hence there exists $E^*$ such that 
\begin{equation*}
\lim_{m_1 \downarrow 0} E_{m_1} =: E^* \geq E_{m_1=0}.
\end{equation*}

We prove that $E^* \leq E_{m_1=0}$. Let $\delta>0$ and let $\varphi \in \mathcal{D}( H_{m_1=0} )$ be a normalized vector such that $\langle \varphi , H_{m_1=0} \varphi \rangle \le E_{m_1=0} + \delta$. Note that
\begin{equation*}
H_{m_1} \leq \mathrm{d}\Gamma (|k_1| + m_1) + \sum_{i \in \ldbrack 2 , n \rdbrack }\mathrm{d}\Gamma ( \sqrt{ k_i^2 + m_i^2 } ) + H_I \leq H_{m_1=0} + m_1 N_1 ,
\end{equation*}
where $N_1 = \int b^*_1(\xi_1) b_1( \xi_1 ) d \xi_1$ is the number operator in the first Fock space.

Let $\tilde {\varphi}_\ell =\mathds{1}_{[0,\ell]}(N_1) \varphi$. Clearly, since $N_1$ commutes with $\mathrm{d}\Gamma( (k_i^2+m_i^2)^{1/2} )$, $i\in \ldbrack 1 , n \rdbrack$, we have that
\begin{equation*}
\lim_{\ell \to \infty} \big \langle \tilde {\varphi}_\ell , H_{f,m_1=0} \tilde {\varphi}_\ell \big \rangle = \big \langle \varphi , H_{f,m_1=0} \varphi \big \rangle.
\end{equation*}
Moreover, applying Proposition \ref{lm:1}, we see that there exist $a>0$ and $b>0$ such that
\begin{equation*}
\big |Ê\big \langle \tilde {\varphi}_\ell - \varphi , H_I (\tilde {\varphi}_\ell-\varphi)\big \rangle \big | \leq a \big \langle \tilde {\varphi}_\ell-\varphi , H_{f,m_1=0} (\tilde {\varphi}_\ell-\varphi) \big \rangle + b \| \tilde {\varphi}_\ell - \varphi \|^2,
\end{equation*}
which tends to zero as $\ell\to\infty$. This shows that, fixing $\ell$ large enough, we have
\begin{equation*}
\langle \tilde{\varphi}_\ell - \varphi , H_{m_1=0} (\tilde{\varphi}_\ell - \varphi) \rangle \leq \delta ,
\end{equation*}
and therefore
\begin{align*}
E^* & \leq \langle \tilde{\varphi}_\ell , H_{m_1} \tilde{\varphi}_\ell \rangle \\
& \leq  \langle \tilde{\varphi}_\ell , H_{m_1=0} \tilde{\varphi}_\ell \rangle + m_1  \langle \tilde{\varphi}_\ell , N_1 \tilde{\varphi}_\ell \rangle \\
&\leq \langle \varphi , H_{m_1=0}  \varphi \rangle + \delta + m_1  \langle \tilde {\varphi}_\ell , N_1 \tilde {\varphi}_\ell \rangle \\
&\leq E + 2\delta + m_1 \langle \tilde {\varphi}_\ell , N_1 \tilde {\varphi}_\ell \rangle.
\end{align*}
Letting $m_1 \to 0$, we obtain that
\begin{equation*}
E^* \leq E + 2 \delta.
\end{equation*}
Hence $E^* \le E$ since $\delta>0$ is arbitrary.

To prove \eqref{eq:Em1=lim}, it suffices to observe that 
\begin{equation*}
E_{m_1=0} \le \langle \Phi_{m_1} , H_{m_1=0} \Phi_{m_1} \rangle \le \langle \Phi_{m_1} , H_{m_1} \Phi_{m_1} \rangle = E_{m_1}.
\end{equation*}
Letting $m_1\to0$ concludes the proof.
\end{proof}

The next step is to prove that $\Phi_{m_1}$ converges strongly, as $m_1 \to 0$, along some subsequence, to a non-vanishing vector of the Hilbert space which will turn out to be a ground state of $H_{m_1=0}$. An important ingredient is to control the expectation of the number operator $N_1$ in the approximate ground state $\Phi_{m_1}$, uniformly in $m_1>0$. This is the purpose of the following proposition.
\begin{proposition}
\label{Nestimate}
Let $i_0 \in \ldbrack 1 , n \rdbrack$ and $\varepsilon > 0$. Suppose that, for all $p \in \ldbrack 0 , n \rdbrack$ and all set of integers $\{ i_1,\dots,i_n \} \in \mathfrak{I}_p$,
\begin{align*}
G^{(p)}_{i_1,\dots,i_n} \in \mathcal{D} \Big ( \Big ( \prod_{ \small{ \substack{ i \in \ldbrack 2 , n \rdbrack_{i_0} \\ j \in \ldbrack 1 , 3 \rdbrack }Ê} } ( \mathrm{h}_i^{j} )^{ \frac12 - \frac{1}{n-1} + \varepsilon } \Big ) \Big ( \prod_{ \small{ \substack { i \in \ldbrack 1 , 1 \rdbrack_{i_0} \\ j \in \ldbrack 1 , 3 \rdbrack } }Ê}  ( \mathrm{h}_i^{j} )^{ \frac12 - \frac56 \frac{1}{n-1} +\varepsilon }   \Big ) \Big ) .
\end{align*}
For a.e. $\xi_1 =( k_1 , \lambda_1 ) \in \mathbb{R}^3 \times \{ -1/2 , 1/2 \}$, $k_1 \neq 0$, we have that
\begin{align}
&\big \| b_{1}(\xi_1) \Phi_{m_1} \big \| \le \mathrm{C}_0 |k_1|^{-1} \sum_{p=0}^n \sum_{i_2,\dots,i_n}  \Big \| \Big ( \prod_{ \small{ \substack{ i \in \ldbrack 2 , n \rdbrack_{i_0} \\ j \in \ldbrack 1 , 3 \rdbrack }Ê} } ( \mathrm{h}_i^{j} )^{{ \frac12 - \frac{1}{n-1} + \varepsilon }} \Big ) G^{(p)}_{1,i_2,\dots,i_n}( \xi_1 , \cdot , \dots, \cdot )  \Big \|_2 , \label{eq:number_estim}
\end{align}
where $\mathrm{C}_0$ is a positive constant independent of $m_1$ and the sum runs over all integers $i_2,\dots,i_n$ such that $\{i_2,\dots,i_n\}=\{2,\dots,n\}$, $i_2<\cdots<i_p$ and $i_{p+1}<\cdots<i_{n}$.
\end{proposition}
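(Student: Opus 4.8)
The plan is to derive a closed expression for $b_1(\xi_1)\Phi_{m_1}$ through a pull-through (commutator) argument, and then to estimate the resulting contracted interaction by means of the relative bound of Proposition~\ref{prop:relative_bound}, the whole difficulty being to keep every constant uniform in $m_1$.

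First I would use that, by Proposition~\ref{prop:massive}, $\Phi_{m_1}$ is a genuine eigenvector, $H_{m_1}\Phi_{m_1}=E_{m_1}\Phi_{m_1}$, and apply $b_1(\xi_1)$ to this identity. The free part is handled by the pull-through formula $b_1(\xi_1)H_{f,m_1}=(H_{f,m_1}+\omega_1(k_1))b_1(\xi_1)$, since $b_1(\xi_1)$ commutes with $H_{f,i}$ for $i\neq 1$. Commuting $b_1(\xi_1)$ through $H_I$ produces, for every term in which the first species is created, a \emph{contracted} operator obtained by deleting the corresponding $b_1^*$ and freezing the first argument of the kernel at $\xi_1$; gathering these gives an interaction operator $W(\xi_1)$ acting only on the $n-1$ massive Fock spaces $2,\dots,n$, with kernels $G^{(p)}_{1,i_2,\dots,i_n}(\xi_1,\cdot,\dots,\cdot)$. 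Tracking the fermionic signs, passing $b_1(\xi_1)$ across the remaining $n-1$ operators of each monomial yields a factor $(-1)^n$, so that $b_1(\xi_1)H_I=W(\xi_1)+(-1)^nH_Ib_1(\xi_1)$ and hence
\[
\big(A_{m_1}-E_{m_1}+\omega_1(k_1)\big)b_1(\xi_1)\Phi_{m_1}=-W(\xi_1)\Phi_{m_1},
\]
with $A_{m_1}=H_{m_1}$ if $n$ is even and $A_{m_1}=H_{f,m_1}-H_I$ if $n$ is odd. In the odd case I would observe that conjugation by $(-1)^{N_1}$ fixes $H_{f,m_1}$ and changes $H_I$ into $-H_I$, so that $\inf\sigma(A_{m_1})=E_{m_1}$ in both cases. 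Since $A_{m_1}-E_{m_1}\ge 0$ and $\omega_1(k_1)=\sqrt{k_1^2+m_1^2}\ge|k_1|$, the resolvent $(A_{m_1}-E_{m_1}+\omega_1(k_1))^{-1}$ has norm at most $|k_1|^{-1}$, which is exactly the prefactor in \eqref{eq:number_estim}.

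It then remains to bound $\|W(\xi_1)\Phi_{m_1}\|$. Because the fields $2,\dots,n$ are massive with \emph{fixed} masses, I would apply the operator bound of Proposition~\ref{prop:relative_bound} to the $(n-1)$-particle operator $W(\xi_1)$, choosing $s=\tfrac12+\kappa$ and $\theta$ so that the massive exponent equals $\tfrac12-\tfrac1{n-1}+\varepsilon$ (the distinguished index being $i_0$ if $i_0\in\ldbrack 2,n\rdbrack$, and arbitrary if $i_0=1$, in agreement with the hypotheses). This produces the sum over $p,i_2,\dots,i_n$ of the kernel norms appearing in \eqref{eq:number_estim}, times a residual factor $\|(H_f+1)^{\rho}\Phi_{m_1}\|$ with $\rho=\tfrac{n-2}{2}(1-\theta)<1$, and a constant that is independent of $m_1$ since it involves only the frozen masses $m_2,\dots,m_n$.

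The main and most delicate step is to show that $\|(H_f+1)^{\rho}\Phi_{m_1}\|$ is bounded uniformly in $m_1$. As $\rho<1$, the spectral theorem and Jensen's inequality (recall $\|\Phi_{m_1}\|=1$) reduce this to the second moment $\|(H_f+1)\Phi_{m_1}\|$. Writing $H_f=H_{m_1}-H_I$ and using $\|H_{m_1}\Phi_{m_1}\|=|E_{m_1}|$ together with a relative bound $\|H_I\Phi_{m_1}\|\le\tfrac12\|H_f\Phi_{m_1}\|+C\|\Phi_{m_1}\|$ gives $\|H_f\Phi_{m_1}\|\lesssim|E_{m_1}|+1$, which stays bounded because $E_{m_1}\to E_{m_1=0}$ by Proposition~\ref{minimsequence}. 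The crucial point is that this relative bound must be uniform in $m_1$: the bound of Proposition~\ref{prop:relative_bound} applied to the full $n$-particle $H_I$ carries a constant blowing up like $m_1^{-\theta/2}$, so I would instead invoke the refined estimate \eqref{eq:estim_lm1_2_4} of Remark~\ref{rk:refined_rel_bound}, treating the first field as massless (bounding $\|\omega_1(k_1)^{-1/2}u\|_2\le\||k_1|^{-1/2}u\|_2$). This is precisely why the hypothesis demands the stronger massless-type regularity $\tfrac12-\tfrac56\tfrac1{n-1}+\varepsilon$ on the variable of field $1$. Combining the resolvent bound, the estimate on $W(\xi_1)$, and this uniform energy control yields \eqref{eq:number_estim}. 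I expect the genuine obstacles to be the careful bookkeeping of the anti-commutation signs in the pull-through (especially the odd-$n$ case) and securing the $m_1$-uniformity of all the constants.
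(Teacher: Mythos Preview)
Your proposal is correct and follows essentially the same route as the paper: pull-through to isolate $b_1(\xi_1)\Phi_{m_1}$, estimate the contracted interaction $W(\xi_1)$ via Proposition~\ref{prop:relative_bound} applied to the $n-1$ massive species, and secure the $m_1$-uniform energy control $\|(H_f+1)\Phi_{m_1}\|\lesssim |E_{m_1}|+1$ through the refined bound \eqref{eq:estim_lm1_2_4} of Remark~\ref{rk:refined_rel_bound} with $\mathrm{I}=\{1\}$. The only cosmetic difference lies in the odd-$n$ case, where you conjugate by $(-1)^{N_1}$ to show $H_{f,m_1}-H_I$ is unitarily equivalent to $H_{m_1}$, while the paper conjugates by the total $(-1)^N$ to reach the equivalent conclusion for $H_{m_1}-2H_I$; both unitaries send $H_I\mapsto -H_I$ and the arguments coincide.
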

\begin{proof}
We have to distinguish the case where the number of different fields, $n$, is even from that where it is odd.

Suppose first that $n$ is even. Since $\Phi_{m_1}$ is a ground state of $H_{m_1}$, we have that
\begin{equation*}
b_{1}(\xi_1)(H_{m_1}-E_{m_1})\Phi_{m_1}=0 ,
\end{equation*}
and the pull-through formula then yields
\begin{align}
(H_{m_1}-E_{m_1}+ (k_1^2+m_1^2)^{\frac12}) b_{1}(\xi_1) \Phi_{m_1}  +  [ b_{1}(\xi_1),H_{I}] \Phi_{m_1}  =  0. \label{eq:csq_pull}
\end{align}
Recalling the expression \eqref{eq:defHI} of $H_I$, a direct computation shows that
\begin{align}
&[b_{1}(\xi_1),H_{I}]  \notag \\
&= \sum_{p=0}^n \sum_{i_2,\dots,i_n}  \int G^{(p)}_{1,i_2,\dots,i_n}( \xi_1 , \dots , \xi_n ) b^*_{i_2}( \xi_{i_2} ) \dots b^*_{i_p}( \xi_{i_p} ) b_{i_{p+1}}( \xi_{i_{p+1}} ) \dots b_{i_n}( \xi_{i_n} ) d \xi_2 \dots d \xi_n ,  \label{eq:commute_pull}
\end{align}
where the second sum runs over all integers $i_2,\dots,i_n$ such that $\{i_2,\dots,i_n\}=\{2,\dots,n\}$, $i_2<\cdots<i_p$ and $i_{p+1}<\cdots<i_{n}$. Applying Proposition \ref{prop:relative_bound}, we obtain that
\begin{align*}
&\big \|Ê[b_{1}(\xi_1),H_{I}] \Phi_{m_1} \big \| \\
& \le \mathrm{C}_{s,\theta,n}\sum_{p=0}^n \sum_{i_2,\dots,i_n}  \Big \| \Big ( \prod_{ \small{ \substack{ i \in \ldbrack 2 , n \rdbrack_{i_0} \\ j \in \ldbrack 1 , 3 \rdbrack }Ê} } ( \mathrm{h}_i^{j} )^{ \theta s} \Big ) G^{(p)}_{1,i_2,\dots,i_n}( \xi_1 , \cdot , \dots, \cdot )  \Big \|_2 \Big \| \Big ( \sum_{ i \in \ldbrack 2, n \rdbrack_{i_0} }H_{f,i} + 1 \Big )^{\frac{n-2}{2} (1 - \theta )} \Phi_{m_1} \Big \| ,
\end{align*}
for any $s>1/2$ and $0\le\theta\le1$, where $\mathrm{C}_{s,\theta,n}$ is a positive constant independent of $m_1>0$. Fixing $s$ and $\theta$ as in the proof of Theorem \ref{th:main}, it is then not difficult to deduce from the previous estimate that
\begin{align*}
&\big \|Ê[b_{1}(\xi_1),H_{I}] \Phi_{m_1} \big \| \\
& \le \mathrm{C} \sum_{p=0}^n \sum_{i_2,\dots,i_n}  \Big \| \Big ( \prod_{ \small{ \substack{ i \in \ldbrack 2 , n \rdbrack_{i_0} \\ j \in \ldbrack 1 , 3 \rdbrack }Ê} } ( \mathrm{h}_i^{j} )^{{ \frac12 - \frac{1}{n-1} + \varepsilon }} \Big ) G^{(p)}_{1,i_2,\dots,i_n}( \xi_1 , \cdot , \dots, \cdot )  \Big \|_2 \Big \| \Big ( \sum_{ i \in \ldbrack 2, n \rdbrack_{i_0} }H_{f,i} + 1 \Big ) \Phi_{m_1} \Big \| ,
\end{align*}
where $\mathrm{C}>0$ does not depend on $m_1$. Together with \eqref{eq:csq_pull}, since 
\begin{equation*}
\big \| (H_{m_1}-E_{m_1}+ (k_1^2+m_1^2)^{\frac12})^{-1}\big \| \le |k_1|^{-1},
\end{equation*}
we obtain that
\begin{align*}
&\big \| b_{1}(\xi_1) \Phi_{m_1} \big \| \\
& \le \mathrm{C} |k_1|^{-1} \sum_{p=0}^n \sum_{i_2,\dots,i_n}  \Big \| \Big ( \prod_{ \small{ \substack{ i \in \ldbrack 2 , n \rdbrack_{i_0} \\ j \in \ldbrack 1 , 3 \rdbrack }Ê} } ( \mathrm{h}_i^{j} )^{{ \frac12 - \frac{1}{n-1} + \varepsilon }} \Big ) G^{(p)}_{1,i_2,\dots,i_n}( \xi_1 , \cdot , \dots, \cdot )  \Big \|_2 \Big \| \Big ( \sum_{ i \in \ldbrack 2, n \rdbrack_{i_0} }H_{f,i} + 1 \Big ) \Phi_{m_1} \Big \| .
\end{align*}
Moreover, applying \eqref{eq:estim_lm1_2_4} in Remark \ref{rk:refined_rel_bound}, with $I=\{1\}$, shows that
\begin{equation*}
\big \|ÊH_I \Phi_{m_1} \big \| \le \mathrm{C}' \big \| \big ( H_{f,m_1} + 1 \big ) \Phi_{m_1} \big \| ,
\end{equation*}
for some positive constant $\mathrm{C}'$ independent of $m_1$. Combined with the previous equation and the fact that $\| H_{m_1} \Phi_{m_1} \| = |E_{m_1}|$ is uniformly bounded in $m_1\ge0$ in a compact set (since $m_1\mapsto E_{m_1}$ is non-decreasing), this yields that
\begin{align*}
&\big \| b_{1}(\xi_1) \Phi_{m_1} \big \|  \le \mathrm{C}'' |k_1|^{-1} \sum_{p=0}^n \sum_{i_2,\dots,i_n}  \Big \| \Big ( \prod_{ \small{ \substack{ i \in \ldbrack 2 , n \rdbrack_{i_0} \\ j \in \ldbrack 1 , 3 \rdbrack }Ê} } ( \mathrm{h}_i^{j} )^{{ \frac12 - \frac{1}{n-1} + \varepsilon }} \Big ) G^{(p)}_{1,i_2,\dots,i_n}( \xi_1 , \cdot , \dots, \cdot )  \Big \|_2 ,
\end{align*}
with $\mathrm{C}''>0$ independent of $m_1$. This proves \eqref{eq:number_estim} in the case where $n$ is even.

If $n$ is odd, the proof has to be modified as follows. Using anti-commutation relations, we now find that
\begin{align}
& [b_1(\xi_1),H_{I}] = -2 H_{I} b_1(\xi_1) + H'_I( \xi_1 ) , \label{eq:nodd}
\end{align}
where
\begin{align*}
H'_I(\xi_1) := \sum_{p=0}^n \sum_{i_2,\dots,i_n}  \int G^{(p)}_{1,i_2,\dots,i_n}( \xi_1 , \dots , \xi_n ) b^*_{i_2}( \xi_{i_2} ) \dots b^*_{i_p}( \xi_{i_p} ) b_{i_{p+1}}( \xi_{i_{p+1}} ) \dots b_{i_n}( \xi_{i_n} ) d \xi_2 \dots d \xi_n.
\end{align*}
The identity \eqref{eq:csq_pull} is thus replaced by
\begin{align*}
 \big ( H_{m_1} - 2 H_I - E_{m_1} + (k_1^2+m_1^2)^{\frac12} \big ) b_1(\xi_1) \Phi_{m_1} + H'_I(\xi_1) \Phi_{m_1} = 0.
\end{align*}
Now, using that $n$ is odd, a direct computation gives
\begin{align*}
(-1)^N H_{m_1} (-1)^N = H_{m_1} - 2 H_I ,
\end{align*}
where $N = \sum_{i=1}^n \int b^*_i(\xi_i)b_i(\xi_i) d\xi_i$ is the total number operator. Therefore, $H_{m_1}$ and $H_{m_1} - 2 H_I$ are unitarily equivalent and hence we have that $\inf \sigma( H_{m_1} - 2 H_I ) = E_{m_1}$. This shows that $ ( H_{m_1} - 2 H_I - E_{m_1} + (k_1^2+m_1^2)^{1/2} )$ is invertible with inverse bounded by $|k_1|^{-1}$. The rest of the proof is identical to that of the previous case.
\end{proof}
The following further technical estimate will be used in the proof of the existence of a ground state for $H_{m_1=0}$.
\begin{proposition}
\label{Nablaestimate}
Under the conditions of Proposition \ref{Nestimate}, for a.e. $\xi_1 =( k_1 , \lambda_1 ) \in \mathbb{R}^3 \times \{ -1/2 , 1/2 \}$, $k_1 \neq 0$, we have that
\begin{align}
& \big \| \nabla_{k_1} ( b_{1}(\xi_1) \Phi_{m_1} ) \big \| \le \mathrm{C}_0 |k_1|^{-2} \sum_{p=0}^n \sum_{i_2,\dots,i_n}  \Big \| \Big ( \prod_{ \small{ \substack{ i \in \ldbrack 2 , n \rdbrack_{i_0} \\ j \in \ldbrack 1 , 3 \rdbrack }Ê} } ( \mathrm{h}_i^{j} )^{{ \frac12 - \frac{1}{n-1} + \varepsilon }} \Big ) G^{(p)}_{1,i_2,\dots,i_n}( \xi_1 , \cdot , \dots, \cdot )  \Big \|_2 \notag \\
& \qquad + \mathrm{C}_0 |k_1|^{-1} \sum_{p=0}^n \sum_{i_2,\dots,i_n}  \Big \| \Big ( \prod_{ \small{ \substack{ i \in \ldbrack 2 , n \rdbrack_{i_0} \\ j \in \ldbrack 1 , 3 \rdbrack }Ê} } ( \mathrm{h}_i^{j} )^{{ \frac12 - \frac{1}{n-1} + \varepsilon }} \Big ) ( \nabla_{k_1} G^{(p)}_{1,i_2,\dots,i_n} ) ( \xi_1 , \cdot , \dots, \cdot )  \Big \|_2 , \label{eq:number_estim_grad}
\end{align}
where $\mathrm{C}_0$ is a positive constant independent of $m_1$.
\end{proposition}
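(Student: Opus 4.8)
The plan is to differentiate in $k_1$ the representation of $b_1(\xi_1)\Phi_{m_1}$ produced by the pull-through formula in the proof of Proposition \ref{Nestimate}. Recall that, for $n$ even, \eqref{eq:csq_pull} gives
\[
b_1(\xi_1)\Phi_{m_1} = -\big(H_{m_1}-E_{m_1}+(k_1^2+m_1^2)^{\frac12}\big)^{-1}[b_1(\xi_1),H_I]\Phi_{m_1},
\]
while for $n$ odd the analogous identity holds with $H_{m_1}$ replaced by the unitarily equivalent operator $H_{m_1}-2H_I$ (whose infimum is again $E_{m_1}$) and with $H'_I(\xi_1)\Phi_{m_1}$ in place of $[b_1(\xi_1),H_I]\Phi_{m_1}$. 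To treat both cases at once I would write $b_1(\xi_1)\Phi_{m_1} = -R(k_1)\Psi(\xi_1)$, where $R(k_1):=(\tilde H_{m_1}-E_{m_1}+\omega_1(k_1))^{-1}$ with $\tilde H_{m_1}\in\{H_{m_1},H_{m_1}-2H_I\}$, $\omega_1(k_1)=(k_1^2+m_1^2)^{1/2}$, and $\Psi(\xi_1)$ the corresponding commutator vector.

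Next I would apply the product rule. Since $\tilde H_{m_1}$ and $E_{m_1}$ are independent of $k_1$, and $\omega_1(k_1)$ enters $R(k_1)$ only as a scalar shift, the resolvent identity gives $\nabla_{k_1}R(k_1)=-(\nabla_{k_1}\omega_1(k_1))R(k_1)^2$, whence
\[
\nabla_{k_1}\big(b_1(\xi_1)\Phi_{m_1}\big) = \big(\nabla_{k_1}\omega_1(k_1)\big)R(k_1)^2\Psi(\xi_1) - R(k_1)\big(\nabla_{k_1}\Psi(\xi_1)\big).
\]
The crucial structural point is that the $k_1$-dependence of $\Psi(\xi_1)$ enters, via \eqref{eq:commute_pull}, only through the kernels $G^{(p)}_{1,i_2,\dots,i_n}(\xi_1,\cdot)$, so $\nabla_{k_1}\Psi(\xi_1)$ is again a commutator vector of exactly the same form, with each $G^{(p)}_{1,i_2,\dots,i_n}$ replaced by $\nabla_{k_1}G^{(p)}_{1,i_2,\dots,i_n}$. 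This regularity is precisely what the hypotheses of Proposition \ref{Nestimate} (and condition \eqref{eq:cond_GS_2}) guarantee.

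For the estimates I would use that $\tilde H_{m_1}-E_{m_1}\ge 0$ yields $\|R(k_1)\|\le\omega_1(k_1)^{-1}\le|k_1|^{-1}$, hence $\|R(k_1)^2\|\le|k_1|^{-2}$, together with $|\nabla_{k_1}\omega_1(k_1)|=|k_1|/\omega_1(k_1)\le 1$. For the two vector factors I would reuse verbatim the bound on $\|\Psi(\xi_1)\|=\|[b_1(\xi_1),H_I]\Phi_{m_1}\|$ established in the proof of Proposition \ref{Nestimate}, namely its domination by $\mathrm{C}\sum_{p,i_2,\dots,i_n}\big\|\big(\prod(\mathrm{h}_i^{j})^{\frac12-\frac{1}{n-1}+\varepsilon}\big)G^{(p)}_{1,i_2,\dots,i_n}(\xi_1,\cdot)\big\|_2$ with $\mathrm{C}$ uniform in $m_1$, and apply the identical estimate to $\nabla_{k_1}\Psi(\xi_1)$ with $\nabla_{k_1}G$ in place of $G$. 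Feeding these into the displayed identity produces exactly the two terms of \eqref{eq:number_estim_grad}, carrying the respective factors $|k_1|^{-2}$ and $|k_1|^{-1}$.

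The only genuinely delicate point is the justification of the differentiation itself: one must verify that $k_1\mapsto b_1(\xi_1)\Phi_{m_1}$ is strongly differentiable for a.e.\ $k_1\neq 0$ and that the operator-valued product rule applies. This is where I would spend care, checking differentiability of $R(k_1)$ in the scalar parameter $\omega_1(k_1)$ and of $\Psi(\xi_1)$ in $k_1$ through the assumed regularity of $G$, so that the formal computation is rigorous. Once this is in place, the remaining work — norm bookkeeping and the collection of sums over $p$ and $i_2,\dots,i_n$, together with the even/odd dichotomy — is routine and parallels the proof of Proposition \ref{Nestimate}.
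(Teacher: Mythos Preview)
Your proposal is correct and follows essentially the same approach as the paper: rewrite $b_1(\xi_1)\Phi_{m_1}$ via the pull-through identity as a resolvent applied to the commutator vector, differentiate in $k_1$ using the product rule (the resolvent derivative producing the $|k_1|^{-2}$ term and the derivative of the commutator vector the $|k_1|^{-1}$ term), and then recycle the norm estimates from the proof of Proposition~\ref{Nestimate}. You spell out a few details the paper leaves implicit (the resolvent identity $\nabla_{k_1}R=-(\nabla_{k_1}\omega_1)R^2$, the bound $|\nabla_{k_1}\omega_1|\le 1$, and the unified treatment of the even/odd cases via $\tilde H_{m_1}$), and you correctly flag the justification of strong differentiability as the only genuinely delicate step; the paper simply differentiates formally and says ``proceeding as in the proof of Proposition~\ref{Nestimate}''.
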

\begin{proof}
Consider for instance the case where $n$ is even. We rewrite \eqref{eq:csq_pull} as
\begin{align}
 b_{1}(\xi_1) \Phi_{m_1}  = - \big (H_{m_1}-E_{m_1}+ (k_1^2+m_1^2)^{\frac12} \big )^{-1} [ b_{1}(\xi_1),H_{I}] \Phi_{m_1}  .
\end{align}
Differentiating w.r.t. $k_1$ and using \eqref{eq:commute_pull}, we obtain that
\begin{align*}
& \nabla_{k_1} ( b_{1}(\xi_1) \Phi_{m_1} ) = - \nabla_{k_1} \Big ( \big (H_{m_1}-E_{m_1}+ (k_1^2+m_1^2)^{\frac12} \big )^{-1} \Big ) \\
& \quad  \sum_{p=0}^n \sum_{i_2,\dots,i_n}  \int G^{(p)}_{1,i_2,\dots,i_n}( \xi_1 , \dots , \xi_n ) b^*_{i_2}( \xi_{i_2} ) \dots b^*_{i_p}( \xi_{i_p} ) b_{i_{p+1}}( \xi_{i_{p+1}} ) \dots b_{i_n}( \xi_{i_n} ) \Phi_{m_1} d \xi_2 \dots d \xi_n  \\
& - \big (H_{m_1}-E_{m_1}+ (k_1^2+m_1^2)^{\frac12} \big )^{-1}\\
&\quad  \sum_{p=0}^n \sum_{i_2,\dots,i_n}  \int ( \nabla_{k_1} G^{(p)}_{1,i_2,\dots,i_n} ) ( \xi_1 , \dots , \xi_n ) b^*_{i_2}( \xi_{i_2} ) \dots b^*_{i_p}( \xi_{i_p} ) b_{i_{p+1}}( \xi_{i_{p+1}} ) \dots b_{i_n}( \xi_{i_n} ) \Phi_{m_1} d \xi_2 \dots d \xi_n .
\end{align*}
Proceeding as in the proof of Proposition \ref{Nestimate}, it is not difficult to deduce from the previous equality that \eqref{eq:number_estim_grad} holds. 

The proof of \eqref{eq:number_estim_grad} in the case where $n$ is odd is analogous, using \eqref{eq:nodd} instead of \eqref{eq:csq_pull}.
\end{proof}
\begin{remark}\label{rk:bounds_l}
Proceeding in the same way as in Propositions \ref{Nestimate} and \ref{Nablaestimate}, one can estimate the norms of $b(\xi_l) \Phi_{m_1}$ and $\nabla_{k_l} ( b(\xi_l) \Phi_{m_1})$, $l \in \ldbrack 2 , n \rdbrack$, as
\begin{align*}
&\big \| b_{l}(\xi_l) \Phi_{m_1} \big \| \le \mathrm{C}_0 \omega_l(k_l)^{-1} \notag \\
&\qquad \times \sum_{p=0}^n \sum_{i_1,\dots,i_n}  \Big \| \Big ( \prod_{ \small{ \substack{ i \in \ldbrack 2 , n \rdbrack_{i_0} \\ j \in \ldbrack 1 , 3 \rdbrack }Ê} } ( \mathrm{h}_i^{j} )^{ \frac12 - \frac{1}{n-1} + \varepsilon } \Big ) \Big ( \prod_{ \small{ \substack { i \in \ldbrack 1 , 1 \rdbrack_{i_0} \\ j \in \ldbrack 1 , 3 \rdbrack } }Ê}  ( \mathrm{h}_1^{j} )^{ \frac12 - \frac56 \frac{1}{n-1} +\varepsilon }   \Big ) G^{(p)}_{i_1,i_2,\dots,i_n}( \xi_1 , \cdot , \dots, \cdot )  \Big \|_2 , 
\end{align*}
and
\begin{align*}
&\big \| \nabla_{k_l} ( b_{l}(\xi_l) ) \Phi_{m_1} \big \| \le \mathrm{C}_0 \omega_l(k_l)^{-2} \notag \\
& \times \Big \{ \sum_{p=0}^n \sum_{i_1,\dots,i_n}  \Big \| \Big ( \prod_{ \small{ \substack{ i \in \ldbrack 2 , n \rdbrack_{i_0} \\ j \in \ldbrack 1 , 3 \rdbrack }Ê} } ( \mathrm{h}_i^{j} )^{ \frac12 - \frac{1}{n-1} + \varepsilon } \Big ) \Big ( \prod_{ \small{ \substack { i \in \ldbrack 1 , 1 \rdbrack_{i_0} \\ j \in \ldbrack 1 , 3 \rdbrack } }Ê}  ( \mathrm{h}_1^{j} )^{ \frac12 - \frac56 \frac{1}{n-1} +\varepsilon }   \Big ) G^{(p)}_{i_1,i_2,\dots,i_n}( \xi_1 , \cdot , \dots, \cdot )  \Big \|_2  \\
& + \omega_l(k_l)  \sum_{p=0}^n \sum_{i_1,\dots,i_n}  \Big \| \Big ( \prod_{ \small{ \substack{ i \in \ldbrack 2 , n \rdbrack_{i_0} \\ j \in \ldbrack 1 , 3 \rdbrack }Ê} } ( \mathrm{h}_i^{j} )^{ \frac12 - \frac{1}{n-1} + \varepsilon } \Big ) \Big ( \prod_{ \small{ \substack { i \in \ldbrack 1 , 1 \rdbrack_{i_0} \\ j \in \ldbrack 1 , 3 \rdbrack } }Ê}  ( \mathrm{h}_1^{j} )^{ \frac12 - \frac56 \frac{1}{n-1} +\varepsilon }   \Big ) ( \nabla_{k_l} G^{(p)}_{i_1,i_2,\dots,i_n} ) ( \xi_1 , \cdot , \dots, \cdot )  \Big \|_2 \Big \} , 
\end{align*}
where $\omega_l(k_l) = ( k_l^2 + m_l^2 )^{1/2}$ and $\mathrm{C}_0$ is a positive constant independent of $m_1$. Theses estimates are not optimal, but they are sufficient for our purpose.
\end{remark}
We can finally prove the main theorem of this section.
\begin{theorem}
Under the conditions of Proposition \ref{Nestimate}, the operator $H_{m_1=0}$ has a ground state, i.e., there exists $\Phi_{m_1=0} \in \mathcal{D}(H_{m_1=0})$, $\Phi_{m_1=0} \neq 0$, such that
\begin{equation*}
H_{m_1=0} \Phi_{m_1=0} = E_{m_1=0} \Phi_{m_1=0}.
\end{equation*}
\end{theorem}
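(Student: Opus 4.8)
The plan is to follow the strategy of \cite{GrLiLo01_01} sketched before the statement: approximate $H_{m_1=0}$ by the massive Hamiltonians $H_{m_1}$, whose normalized ground states $\Phi_{m_1}$ exist by Proposition \ref{prop:massive}, and pass to the limit $m_1\downarrow 0$. First I would fix a sequence $m_1=m_1^{(j)}\downarrow 0$ and, using $\|\Phi_{m_1}\|=1$, extract a subsequence converging weakly, $\Phi_{m_1}\rightharpoonup \Phi_{m_1=0}$. The entire difficulty is then to show that this weak limit is nonzero and is a genuine eigenvector at the bottom of the spectrum; I would achieve this by upgrading the weak convergence to \emph{strong} convergence.

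Second, I would collect the uniform-in-$m_1$ bounds that drive the compactness argument. By Proposition \ref{minimsequence} the energies converge, $E_{m_1}\to E_{m_1=0}$, and $\langle \Phi_{m_1}, H_{m_1=0}\Phi_{m_1}\rangle\to E_{m_1=0}$ by \eqref{eq:Em1=lim}; combined with the mass-uniform relative bound of Remark \ref{rk:refined_rel_bound} this yields $\sup_{m_1}\langle \Phi_{m_1}, H_f\Phi_{m_1}\rangle<\infty$ and $\sup_{m_1}\|H_f\Phi_{m_1}\|<\infty$. Integrating the pull-through bound \eqref{eq:number_estim} of Proposition \ref{Nestimate} over $\xi_1$ and invoking the infrared condition \eqref{eq:cond_GS_1} of Theorem \ref{th:GS} gives $\langle \Phi_{m_1}, N_1\Phi_{m_1}\rangle=\int\|b_1(\xi_1)\Phi_{m_1}\|^2 d\xi_1\le C$, uniformly in $m_1$; likewise Proposition \ref{Nablaestimate} together with \eqref{eq:cond_GS_2} controls $\int\|\nabla_{k_1}(b_1(\xi_1)\Phi_{m_1})\|^2 d\xi_1$, while Remark \ref{rk:bounds_l} supplies the analogous estimates on the massive fibers.

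Third comes the compactness step proper. I would prove $\Phi_{m_1}\to\Phi_{m_1=0}$ strongly by exhibiting, for every $\delta>0$, a compact projection $P$ with $\|(\mathds{1}-P)\Phi_{m_1}\|<\delta$ uniformly in $m_1$; weak convergence then forces norm convergence. The tail in the number of massless particles is small by the $N_1$-bound, since $\|\mathds{1}(N_1\ge L)\Phi_{m_1}\|^2\le C/L$; the high-momentum region $\{|k_1|>\Lambda\}$ is small by the $H_f$-bound; on the remaining region $\{N_1\le L,\ |k_1|\le\Lambda\}$ the massive-field content lies in a relatively compact set because $\mathrm{d}\Gamma$ of a compact operator is relatively compact on sectors of bounded energy and particle number, and the gradient bound of Proposition \ref{Nablaestimate} provides equicontinuity in the massless momentum $k_1$ (a Rellich--Kondrachov argument in the variable $k_1$); finally the infrared slice $\{|k_1|\le\sigma\}$ contributes at most $\int_{|k_1|\le\sigma}\|b_1(\xi_1)\Phi_{m_1}\|^2 d\xi_1$, which tends to $0$ as $\sigma\downarrow 0$ by \eqref{eq:cond_GS_1}. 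Combining the four estimates yields the required uniform smallness of $\|(\mathds{1}-P)\Phi_{m_1}\|$, hence strong convergence and $\|\Phi_{m_1=0}\|=1$, so in particular $\Phi_{m_1=0}\neq 0$.

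Finally, strong convergence closes the argument: since $H_{m_1=0}\le H_{m_1}$ and $H_f$ stays bounded along the sequence, lower semicontinuity of $\varphi\mapsto\langle\varphi,H_{m_1=0}\varphi\rangle$ together with \eqref{eq:Em1=lim} gives $\langle\Phi_{m_1=0},H_{m_1=0}\Phi_{m_1=0}\rangle\le E_{m_1=0}$, and since $E_{m_1=0}=\inf\sigma(H_{m_1=0})$ this forces equality; hence $\Phi_{m_1=0}$ is a normalized eigenvector at the bottom of the spectrum, i.e. a ground state. I expect the main obstacle to be the third step, the non-vanishing of the weak limit: this is the infrared problem, and it is exactly where the hypotheses \eqref{eq:cond_GS_1}--\eqref{eq:cond_GS_2}, stronger than those needed for self-adjointness, are indispensable, through the weighted integrability of $b_1(\xi_1)\Phi_{m_1}$ and its $k_1$-gradient near $k_1=0$.
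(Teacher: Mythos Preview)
Your strategy matches the paper's: both follow \cite{GrLiLo01_01}, approximating by $H_{m_1}$, extracting uniform pull-through bounds from Propositions \ref{Nestimate}--\ref{Nablaestimate} and Remark \ref{rk:bounds_l}, and upgrading weak to strong convergence of $\Phi_{m_1}$ by a compactness argument. The final identification of the limit as a ground state is done in the paper via $\|(H_{m_1=0}-E_{m_1=0})^{1/2}\Phi_j\|\to 0$, which is equivalent to your lower-semicontinuity step.

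There is, however, one genuine slip. You assert that \eqref{eq:cond_GS_2} controls $\int\|\nabla_{k_1}(b_1(\xi_1)\Phi_{m_1})\|^2\,d\xi_1$, but this is false in general: the pointwise bound \eqref{eq:number_estim_grad} carries a factor $|k_1|^{-2}$, and squaring and integrating over $B_\Lambda$ would require the $r=2$ case of \eqref{eq:cond_GS_1}, which is \emph{not} assumed. Conditions \eqref{eq:cond_GS_1}--\eqref{eq:cond_GS_2} are deliberately stated only for $1\le r<2$ because only the $L^r$ norm of $\xi_1\mapsto\|\nabla_{k_1}(b_1(\xi_1)\Phi_{m_1})\|$ on $B_\Lambda$ can be controlled. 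Accordingly, the paper does not run a Hilbert-space compactness argument; it works sector by sector, showing that each component $\Phi_j^{(l_1,\dots,l_n)}$ is bounded in $W^{1,r}(B_\Lambda^{l_1+\dots+l_n})$ uniformly in $j$, and then applies the Rellich--Kondrachov theorem (with $r<2$ chosen close enough to $2$, depending on the dimension $3(l_1+\dots+l_n)$, so that $W^{1,r}\hookrightarrow L^2$ compactly). This treats the massless and the massive momentum variables on the same footing: the gradient bounds of Remark \ref{rk:bounds_l} feed directly into the same $W^{1,r}$ estimate, rather than into a separate ``$\mathrm{d}\Gamma$ of a compact operator'' mechanism as you suggest. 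Once you make this $W^{1,r}$ correction, your tail controls in particle number and in momentum, together with the uniform number bound the paper also invokes, reduce the problem to precisely this sector-wise compactness, and the rest of your outline goes through.
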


\begin{proof}
Let $( m_1^{(j)} )_{j\in\mathbb{N}}$ be a decreasing sequence of positive real numbers such that $m_1^{(j)} \to 0$ as $j \to \infty$. To shorten notations, we denote by $\Phi_j \in \mathcal{D}( H_{m_1^{(j)}} )$ a normalized ground state of $H_{m_1^{(j)}}$, which exists according to Proposition \ref{prop:massive}. By Proposition \ref{minimsequence}, we have that
\begin{equation*}
 \lim_{ j \to \infty } \big \| ( H_{m_1=0} - E_{m_1=0} )^{\frac12} \Phi_{j} \big \| = 0. 
\end{equation*}
To prove the theorem, we claim that it suffices to show that $(\Phi_j)$ converges strongly, as $j \to \infty$. Indeed, if we prove that there exists $\Phi_{m_1=0}$ such that $\| \Phi_j - \Phi_{m_1=0} \|$, as $j \to \infty$, we can deduce from the previous equality that $\Phi_{m_1=0} \in \mathcal{D} ( ( H_{m_1=0} - E_{m_1=0} )^{\frac12} )$, $\| \Phi_{m_1=0} \| = 1$ and $( H_{m_1=0} - E_{m_1=0} )^{\frac12}  \Phi_{m_1=0} = 0$. The statement of the theorem then follows.

Now, we prove that $(\Phi_j)$ converges strongly as $j\to \infty$. We decompose
\begin{equation*}
\Phi_j = \sum_{l_1 , \dots , l_n \in \mathbb{N}} \Phi_{j}^{(l_1,\dots,l_n)} \in \otimes_a^{l_1} L^2 ( \mathbb{R}^3 \times \{ -\frac12 , \frac12 \} ) \otimes \cdots \otimes_a^{l_n} L^2 ( \mathbb{R}^3 \times \{ -\frac12 , \frac12 \} ).
\end{equation*}
Recall that $N$ stands for the total number operator in $\mathcal{H}$. By Proposition \ref{Nestimate} and Remark \ref{rk:bounds_l}, using Hypothesis \eqref{eq:cond_GS_1}, one can see that $\langle \Phi_j , N \Phi_j \rangle$ is uniformly bounded in $j \in \mathbb{N}$. From this property, one can deduce that the strong convergence of $\Phi_{j}^{(l_1,\dots,l_n)}$ for any $l_1,\dots,l_n \in \mathbb{N}$ implies the strong convergence of $\Phi_j$.

Moreover, we claim that it suffices to prove the $L^2$ convergence of $\Phi_{j}^{(l_1,\dots,l_n)}$ on any compact subset of $( \mathbb{R}^3 \times \{ -1/2 , 1/2 \} )^{l_1+\dots+l_n}$. Indeed, we observe that
\begin{align*}
& \Phi_{j}^{(l_1,\dots,l_n)}( \xi_1^{(1)} , \dots , \xi_1^{(l_1)} , \dots , \xi_n^{(1)} , \dots , \xi_n^{(l_n)} ) \\
& = \frac{1}{\sqrt{l_1}} b_1(\xi_1^{(1)}) \Phi_j^{(l_1-1,l_2,\dots,l_n)} ( \xi_1^{(2)} , \dots , \xi_1^{(l_1)} , \dots , \xi_n^{(1)} , \dots , \xi_n^{(l_n)} ) .
\end{align*}
Recall that $B_\Lambda = \{ ( k , \lambda ) \in \mathbb{R}^3 \times \{ -1/2 , 1/2 \} , |k| \le \Lambda \}$. From Proposition \ref{Nestimate}, it follows that
\begin{align*}
&\big \| \mathds{1}_{B_\Lambda^{\mathrm{c}} }( \xi_1^{(1)} ) b_{1}(\xi_1) \Phi_{j} \big \| \le \frac{ \mathrm{C}_0 }{ \Lambda } \sum_{p=0}^n \sum_{i_2,\dots,i_n}  \Big \| \Big ( \prod_{ \small{ \substack{ i \in \ldbrack 2 , n \rdbrack_{i_0} \\ j \in \ldbrack 1 , 3 \rdbrack }Ê} } ( \mathrm{h}_i^{j} )^{{ \frac12 - \frac{1}{n-1} + \varepsilon }} \Big ) G^{(p)}_{1,i_2,\dots,i_n}( \xi_1 , \cdot , \dots, \cdot )  \Big \|_2 ,
\end{align*}
uniformly in $j \in \mathbb{N}$. This implies that
\begin{align}
\big \| \mathds{1}_{B_\Lambda^{\mathrm{c}} }( \cdot ) \Phi_{j}^{(l_1,\dots,l_n)} \big \|_2^2 \lesssim \frac{1}{\Lambda^2} \sum_{p=0}^n \sum_{i_2,\dots,i_n}  \Big \| \Big ( \prod_{ \small{ \substack{ i \in \ldbrack 2 , n \rdbrack_{i_0} \\ j \in \ldbrack 1 , 3 \rdbrack }Ê} } ( \mathrm{h}_i^{j} )^{{ \frac12 - \frac{1}{n-1} + \varepsilon }} \Big ) G^{(p)}_{1,i_2,\dots,i_n}  \Big \|_2^2 . \label{eq:compact_enough}
\end{align}
Here $\mathds{1}_{B_\Lambda^{\mathrm{c}} }( \cdot ) \Phi_{j}^{(l_1,\dots,l_n)}$ should be understood as the map
\begin{equation*}
\big ( \xi_1^{(1)} , \dots , \xi_1^{(l_1)} , \dots , \xi_n^{(1)} , \dots , \xi_n^{(l_n)} \big ) \mapsto \mathds{1}_{B_\Lambda^{\mathrm{c}} }( \xi_1^{(1)} ) \Phi_{j}^{(l_1,\dots,l_n)} \big ( \xi_1^{(1)} , \dots , \xi_1^{(l_1)} , \dots , \xi_n^{(1)} , \dots , \xi_n^{(l_n)} \big ).
\end{equation*}
Clearly, the right hand side of \eqref{eq:compact_enough} can be made arbitrarily small, uniformly in $j \in \mathbb{N}$, for $\Lambda$ large enough. An analogous estimate holds if $\xi_1^{(1)}$ is replaced by any of the other variable $\xi_i^{(\ell)}$. This shows that it suffices to establish the $L^2$ convergence of $\Phi_{j}^{(l_1,\dots,l_n)}$ on any compact subset of $( \mathbb{R}^3 \times \{ -1/2 , 1/2 \} )^{l_1+\dots+l_n}$.

Now we prove that $( \Phi_{j}^{(l_1,\dots,l_n)} )_{j \in \mathbb{N}} $ converges strongly on 
\begin{equation*}
B_\Lambda^{l_1+\dots+l_n} \subset ( \mathbb{R}^3 \times \{ -1/2 , 1/2 \} )^{l_1+\dots+l_n}.
\end{equation*}
We note that, since $\| \Phi_j \|Ê= 1$ for all $j \in \mathbb{N}$, $(\Phi_{j}^{(l_1,\dots,l_n)})_{j\in\mathbb{N}}$ is a bounded sequence in $L^2$, hence, by the Banach-Alaoglu theorem, it converges weakly in $L^2$, along some subsequence. Consider a weakly convergent subsequence which, to simplify, is denoted by the same symbol $(\Phi_{j}^{(l_1,\dots,l_n)})_{j\in\mathbb{N}}$. We define the space $W^{1,r}( B_\Lambda^{l_1+\dots+l_n} )$ as the set of all measurable maps $f$ from $B_\Lambda^{l_1+\dots+l_n}$ to $\mathbb{C}$ such that, for all values of the spin variables $(\lambda_1^{(1)} , \dots , \lambda_n^{(l_n)} )$, the map
\begin{equation*}
f \big ( \cdot , \lambda_1^{(1)} , \cdot , \lambda_1^{(2)} , \dots , \cdot , \lambda_n^{(l_n)} \big )
\end{equation*}
belongs to the Sobolev space $W^{1,r}( \{ |k| \le \Lambda \}^{l_1+\dots+l_n} )$.  We claim that, for all $j \in \mathbb{N}$, $\Phi_{j}^{(l_1,\dots,l_n)}$ belongs to $W^{1,r}( B_\Lambda^{l_1+\dots+l_n} )$ provided that $1\le r<2$. Indeed, since $r<2$ and $\Phi_{j}^{(l_1,\dots,l_n)} \in L^2 ( B_\Lambda^{l_1+\dots+l_n} )$, we see that $\Phi_{j}^{(l_1,\dots,l_n)} \in L^r ( B_\Lambda^{l_1+\dots+l_n} )$. Moreover, similarly as in \cite{GrLiLo01_01}, we can compute
\begin{align*}
&\int_{B_\Lambda^{l_1+\dots+l_n}} \Big | ( \nabla_{k_1^{(1)}} \Phi_{j}^{(l_1,\dots,l_n)} ) \big ( \xi_1^{(1)} , \dots , \xi_n^{(l_n)} \big ) \Big |^r d\xi_1^{(1)} \dots d\xi_n^{(l_n)} \\
&= \frac{1}{\sqrt{l_1}^r} \int_{B_\Lambda^{l_1+\dots+l_n}} \Big | ( \nabla_{k_1^{(1)}} b_1(\xi_1^{(1)} ) \Phi_j^{(l_1-1,l_2,\dots,l_n)} ) \big ( \xi_1^{(2)} , \dots , \xi_n^{(l_n)} \big ) \Big |^r d\xi_1^{(1)} \dots d\xi_n^{(l_n)} \\
&\lesssim \int_{B_\Lambda} \Big ( \int_{B_\Lambda^{(l_1-1)+\dots+l_n}} \Big | ( \nabla_{k_1^{(1)}} b_1(\xi_1^{(1)} ) \Phi_j^{(l_1-1,l_2,\dots,l_n)} ) \big ( \xi_1^{(2)} , \dots , \xi_n^{(l_n)} \big ) \Big |^2 d\xi_1^{(2)} \dots d\xi_n^{(l_n)} \Big )^{\frac{r}{2}} d \xi_1^{(1)} \\
&\le \int_{B_\Lambda} \big \| \nabla_{k_1^{(1)}} ( b_1(\xi_1^{(1)} ) \Phi_j ) \big \|^r d \xi_1^{(1)} ,
\end{align*}
the first inequality being a consequence of H{\"o}lder's inequality. Applying Proposition \ref{Nablaestimate}, we obtain that
\begin{align*}
&\int_{B_\Lambda^{l_1+\dots+l_n}} \Big | ( \nabla_{k_1^{(1)}} \Phi_{j}^{(l_1,\dots,l_n)} ) \big ( \xi_1^{(1)} , \dots , \xi_n^{(l_n)} \big ) \Big |^r d\xi_1^{(1)} \dots d\xi_n^{(l_n)} \\
&\lesssim \sum_{p=0}^n \sum_{i_2,\dots,i_n} \int_{B_\Lambda}  \big|k_1^{(1)}\big|^{-2r}  \Big \| \Big ( \prod_{ \small{ \substack{ i \in \ldbrack 2 , n \rdbrack_{i_0} \\ j \in \ldbrack 1 , 3 \rdbrack }Ê} } ( \mathrm{h}_i^{j} )^{{ \frac12 - \frac{1}{n-1} + \varepsilon }} \Big ) G^{(p)}_{1,i_2,\dots,i_n}( \xi_1^{(1)} , \cdot , \dots, \cdot )  \Big \|_2^r d \xi_1^{(1)} \\
&\qquad + \sum_{p=0}^n \sum_{i_2,\dots,i_n} \int_{B_\Lambda}  \big | k_1^{(1)} \big |^{-r}  \Big \| \Big ( \prod_{ \small{ \substack{ i \in \ldbrack 2 , n \rdbrack_{i_0} \\ j \in \ldbrack 1 , 3 \rdbrack }Ê} } ( \mathrm{h}_i^{j} )^{{ \frac12 - \frac{1}{n-1} + \varepsilon }} \Big ) ( \nabla_{k_1^{(1)}} G^{(p)}_{1,i_2,\dots,i_n}( \xi_1^{(1)} , \cdot , \dots, \cdot ) )  \Big \|_2^r d \xi_1^{(1)} ,
\end{align*}
which is finite by assumption. In the same way, one can verify that the other derivatives $\nabla_{k_i^{(\ell)}} \Phi_{j}^{(l_1,\dots,l_n)}$ belong to $L^r$. Hence $\Phi_{j}^{(l_1,\dots,l_n)} \in W^{1,r} ( B_\Lambda^{l_1+\dots+l_n} )$.

Finally, since $(\Phi_{j}^{(l_1,\dots,l_n)})_{j\in\mathbb{N}}$ converges weakly in $L^2( B_\Lambda^{l_1+\dots+l_n} )$, it also converges weakly in $W^{1,r}( B_\Lambda^{l_1+\dots+l_n} )$, because $r < 2$. As in \cite{GrLiLo01_01}, applying the Rellich-Kondrachov theorem, one then obtain that $(\Phi_{j}^{(l_1,\dots,l_n)})_{j\in\mathbb{N}}$ converges strongly in $L^2( B_\Lambda^{l_1+\dots+l_n} )$. This concludes the proof of the theorem.
\end{proof}

\subsection{Proof of Theorem \ref{th:GS}}\label{subsec:massless}

In order to conclude the proof of Theorem \ref{th:GS}, it suffices to proceed by induction in the following way. The induction hypothesis $(\mathrm{H}_l)$ is that, if $l$ particles are massless and $n-l$ are massive, then $H$ has a ground state. It has been shown in Section \ref{subsec:massive} that $(\mathrm{H}_0)$ holds. Assuming that $(\mathrm{H}_l)$ holds true, we proceed to prove that $(\mathrm{H}_{l+1})$ holds exactly in the same way as in Section \ref{subsec:one_massless}. More precisely, assuming that $m_1=m_2=\dots=m_{l+1}=0$ and $\min_{i\in\ldbrack l+2 , n \rdbrack} m_i > 0$, the total Hamiltonian
\begin{equation*}
H \equiv H_{m_1=0,\dots,m_l=0,m_{l+1}=0} 
\end{equation*}
is approximated by the family of operators
\begin{equation*}
H_{m_1=0,\dots,m_l=0,m_{l+1}} , \quad m_{l+1}>0 ,
\end{equation*}
where the free energy for the $(l+1)^{\text{th}}$ massless field, $\mathrm{d}\Gamma( |k_{l+1}| )$, is replaced by $\mathrm{d}\Gamma( ( k_{l+1}^2 + m_{l+1}^2 )^{1/2}$. By the induction hypothesis, $H_{m_1=0,\dots,m_l=0,m_{l+1}}$ has a normalized ground state $\Psi_{m_1=0,\dots,m_l=0,m_{l+1}}$. One shows that $\Psi_{m_1=0,\dots,m_l=0,m_{l+1}}$ converges strongly, as $m_{l+1} \to 0$, to a ground state of $H_{m_1=0,\dots,m_l=0,m_{l+1}=0}$ by adapting the proof given in Section \ref{subsec:one_massless} in a straightforward way. Details are left to the reader.

%\vspace{0,4cm}

\noindent \textbf{Acknowledgments} J.F. and J.-C.G. are grateful to J.-M. Barbaroux for many discussions and fruitful collaborations.

\bibliographystyle{amsalpha}

\end{document}